\theoremstyle{theorem} \newtheorem{lemma}{Lemma}
\theoremstyle{theorem} \newtheorem{theorem}{Theorem}
\theoremstyle{theorem} \newtheorem{corollary}{Corollary}
\title{Gauge invariant plane-wave solutions in supersymmetric Yang-Mills quantum mechanics}
\author{Piotr Korcyl\thanks{e-mail address: korcyl@th.if.uj.edu.pl} \\ \small{\emph{M. Smoluchowski Institute of Physics, Jagiellonian
University}} \\ \small{\emph{Reymonta 4, 30-059 Krak\'{o}w,
Poland}}}
\date{\today}
\begin{document}

\maketitle

\begin{abstract}
We derive the spectra of $D=2$, $SU(3)$ supersymmetric Yang-Mills
quantum mechanics in all fermionic sectors. Moreover, we provide exact expressions for the corresponding eigenvectors
in the sectors with none and one fermionic quantum. 
We also generalize our results obtained in a cut Fock space to the infinite cut-off limit.
\end{abstract}

\section{Motivations}

Gauge theories are believed to describe all forces in Nature.
One of the means to investigate the main characteristics of these theories is the study of their reduced to one point in
space versions, called  Yang-Mills quantum mechanics (YMQM) \cite{luscher1}\cite{luscher2}\cite{baal1}\cite{baal2}. From this viewpoint, one can calculate the
zero-volume limit of the spectrum of the theory, then find small volume corrections if needed.
Such approach yields the possibility to analyze the gauge dynamics which,
even in the case of the simplest, $D=2$, of these reduced systems,
turns out to be non-trivial due to the singlet constraint.
If one is interested in supersymmetry, it can be incorporated in the Yang-Mills quantum mechanics as well,
giving rise to supersymmetric Yang-Mills quantum mechanics (SYMQM).

A different motivation for studying such systems comes from the work by Hoppe \cite{hoppe} and de Witt, Hoppe and Nicolai \cite{dewit},
who showed that YMQM (SYM\-QM) describe the regularized dynamics of a relativistic quantum membrane (supermembrane).
Hence, the eigenstates of YMQM (SYMQM) turn out to be the wavefunctions of quantum membrane (supermembrane).
However, SYMQM in higher dimensional spaces, where the definition of a supermembrane
is consistent, are interacting theories, and difficult to solve. Thus, although the $D=2$ case is unphysical,
an analytic expression for its eigenstates is of interest,
since it may provide some indications on the form of higher dimensional solutions.

Several years ago a program to investigate numerically the whole class of supersymmetric quantum mechanical systems in a Hamiltonian formulation
by a cut Fock space method was proposed by Wosiek \cite{wosiek1}.
Numerous systems have already been studied and much analytical insight was inspired by these numerical
results \cite{wosiek3}\cite{wosiek4}\cite{wosiek5}\cite{kotanski2}\cite{doktorat_macka}\cite{korcyl_phd}(and the references therein).

Although the simplest case of $D=2$ supersymmetric Yang-Mills quantum mechanics with the $SU(2)$ symmetry group
was solved by Claudson and Halpern long time ago \cite{claudson}, very few exact solutions were found for other
groups. Some solutions were obtained in the bosonic sector \cite{doktorat_macka}\cite{samuel} for a general $SU(N)$ group, however
the fermionic eigenvalues and eigenvectors seem to be unknown in the literature.
In this work we present an analytic method inspired by the numerical approach of Wosiek to derive
exact solutions to the $D=2$ SYMQM valid in all fermionic sectors.
We demonstrate our approach by solving the nontrivial system with $SU(3)$ symmetry group.
We obtain the spectra in all fermionic sectors and, moreover, find closed expressions for the eigenvectors
in the bosonic sector and in the sector with one fermionic quantum. Generalizations of the eigenvectors to sectors with more fermions
are possible.

This paper is composed as follows. We start by introducing the class of supersymmetric Yang-Mills quantum mechanics
and presenting the particular model of $D=2$ with the $SU(3)$ gauge group. This is followed by a discussion of the cut Fock space
approach in section \ref{sec.The cut Fock basis method}. The derivation of the solutions is presented in subsequent sections:
in section \ref{sec.bosonic}, the bosonic sector is treated
and the closed expressions for the spectrum and eigenstates are calculated; in section \ref{sec. Solutions in the $n_F=1$ sector}
the spectrum as well as the eigenstates in the sector with one fermion are derived.
For both derivations solutions of recursion relations on the components
of energy eigenstate in the Fock basis is needed. The theorems used to solve these recurrences are presented collectively in the appendices.
Eventually, section \ref{sec. Conclusions} contains a summary of results.

\section{$D=2$, Supersymmetric Yang-Mills Quantum Mechanics}
\label{sec. $D=2$, Supersymmetric Yang-Mills Quantum Mechanics}

Supersymmetric Yang-Mills quantum mechanics can be obtained
by a dimensional reduction of a supersymmetric, $D=d+1$ dimensional Yang-Mills quantum field theory to one point in $d$-dimensional space.
The initial local gauge symmetry of field theory is thus reduced to a global symmetry of the quantum mechanical system.
 The simplest ones among such systems are those obtained by reduction of the
$\mathcal{N}=1$ Yang-Mills gauge theory in two dimensions \cite{claudson}. They are described by a scalar field $\phi_A$
and a complex fermion $\lambda_A$, where $A$ labels the generators of the gauge group. The Hamiltonian is\footnote{A sum over repeated indices is assumed.}
\begin{equation}
H = \frac{1}{2} \pi_A \pi_A + i g f_{ABC} \bar{\lambda}_A \phi_B \lambda_C,
\end{equation}
with the algebra of operators given by
\begin{equation}
[\phi_A, \pi_B] = i \delta_{A,B}, \qquad \{ \lambda_A, \bar{\lambda}_B \} = \delta_{A,B}.
\end{equation}
The generator of the gauge transformations is
\begin{equation}
G_A = f_{ABC}\big( \phi_B \pi_C - i \bar{\lambda}_B \lambda_C \big).
\end{equation}
The supersymmetry charges are
\begin{equation}
Q = \lambda_A \pi_A, \qquad \bar{Q} = \bar{\lambda}_A \pi_A,
\end{equation}
and
\begin{equation}
\{ Q, \bar{Q} \} = \pi_A \pi_A = 2H - 2 g \phi_A G_A.
\end{equation}
The fermionic term of the Hamiltonian is proportional to the generator of the gauge symmetry and hence it is supposed
to vanish on any physical (gauge invariant) state due to the Gauss law.
Hence, the Hamiltonian is simply
\begin{equation}
H = \frac{1}{2} \pi_A \pi_A.
\label{eq.hamiltonian}
\end{equation}
Subsequently, we introduce bosonic creation and annihilation operators,
\begin{equation}
a_A = \frac{1}{\sqrt{2}}\big( \phi_A + i \pi_A\big), \qquad a^{\dagger}_A = \frac{1}{\sqrt{2}}\big( \phi_A - i \pi_A\big),
\end{equation}
and express them, as well as the fermionic creation and annihilation operators, in the matrix notation,
\begin{equation}
a_{i j} =  a_A T^A_{i j}, \qquad a^{\dagger}_{i j} = a^{\dagger}_A T^A_{i j}, \qquad
f_{i j} =  f_A T^A_{i j}, \qquad f^{\dagger}_{i j} = f^{\dagger}_A T^A_{i j},
\end{equation}
where $T^A_{i j}$ are the generators of the $SU(N)$ group in the fundamental representation.
Hence, all operators become operator valued $N \times N$ matrices.
Such a notation has a very practical feature, namely, any gauge invariant operator
can be simply written as a trace of a product of appropriate operator valued matrices \cite{maciek3}.
In situations when it is self-evident we will use a simplified notation for the trace of any such matrix, namely,
$\textrm{tr}(O) \equiv (O)$.
We get
\begin{equation}
Q^{\dagger} = \frac{i}{\sqrt{2}}\big( \textrm{tr} \ (f^{\dagger} a^{\dagger}) -  \textrm{tr} \ (f^{\dagger} a) \big), \quad \textrm{and} \quad Q = \frac{i}{\sqrt{2}}\big(  \textrm{tr} \ (f a^{\dagger}) -  \textrm{tr} \ (f a) \big),
\label{eq.supercharges}
\end{equation}
and
\begin{equation}
H =  \textrm{tr} \ (a^{\dagger}a) + \frac{N^2-1}{4} - \frac{1}{2}  \textrm{tr} \  (a^{\dagger} a^{\dagger}) - \frac{1}{2}  \textrm{tr} \ (aa).
\label{eq.hamiltonian_general}
\end{equation}

Therefore, in the case of the $SU(3)$ group we have,
\begin{equation}
H = \textrm{tr} \ (a^{\dagger}a) + 2 -\frac{1}{2} \textrm{tr} \ (a^{\dagger} a^{\dagger}) - \frac{1}{2}\textrm{tr} \ (aa).
\label{eq.hamiltonian2}
\end{equation}
Obviously, the Hamiltonian eq.\eqref{eq.hamiltonian2} conserves the fermionic occupation number. Hence, we can
analyze its spectrum separately in each subspace of the Hilbert space with fixed fermionic occupation number.
There are 9 such fermionic sectors because of the Pauli exclusion principle.
This Hamiltonian possess also another symmetry, namely, the particle-hole symmetry \cite{doktorat_macka}\cite{korcyl_phd},
which can be thought of as a quantum mechanical precursor of the charge conjugation symmetry in quantum field theory.
This symmetry can be observed as a perfect matching of the energy eigenvalues in the sectors with $n_F$
and $8-n_F$ fermionic quanta\footnote{This symmetry is present in $D=2$, SYMQM models with $SU(N)$ gauge group for any $N$.}.
Thus, it is sufficient to
consider only the sectors with $n_F=0, \dots, 4$ fermionic quanta.

\section{The cut Fock basis method}
\label{sec.The cut Fock basis method}

The cut Fock space approach turned out to be very useful in the studies of gauge
systems for several reasons.
First of all, it is a fully non-perturbative tool. Second, it treats
bosons and fermions on an equal footing, therefore calculations can be extended
to all fermionic sectors without difficulties.
Finally, it can be generalized, to handle $SU(N)$ gauge groups with $N\ge 2$,
as well as systems defined in spaces of various dimensionality.

The analytic approach used in this work is inspired by this numerical treatment. Hence, we now briefly summarize
the numerical approach in order to introduce the
basic notions needed in the following parts. For a more extensive discussion
see \cite{wosiek1}\cite{wosiek5}\cite{korcyl_phd}\cite{korcyl0}\cite{korcyl1}.

The basic ingredient of the numerical approach is a
systematic construction of the Fock basis using the concepts
of elementary bosonic bricks and composite
fermionic bricks \cite{korcyl1} (see tables \ref{tab. su3_bosonic_bricks} and \ref{tab. su3_fermionic_bricks}).
Elementary bricks are linearly independent single trace operators,
composed uniquely of creation operators, which cannot be reduced by the Cayley-Hamilton theorem.
The set of states obtained by acting with all possible products of powers of bosonic elementary bricks
on the Fock vacuum spans the bosonic sector of the Hilbert space \cite{doktorat_macka}.
As far as the fermionic
sectors are concerned, apart of the bosonic bricks one has to use fermionic bricks which need not to be single traces operators \cite{korcyl1}.
\begin{table}
\begin{center}
\begin{tabular}{|c|}
\hline
$n_F=0$\\
\hline
$(a^{\dagger}a^{\dagger})$ \\
$(a^{\dagger}a^{\dagger}a^{\dagger})$ \\
\hline
\end{tabular}
\end{center}
\caption{$SU(3)$ bosonic bricks. \label{tab. su3_bosonic_bricks}}
\end{table}
\begin{table}
\begin{center}
\begin{tabular}{|c|c|c|c|}
\hline
$n_F=1$ & $n_F=2$ & $n_F=3$ & $n_F=4$\\
\hline
 $(f^{\dagger} a^{\dagger})$& $(f^{\dagger}f^{\dagger} a^{\dagger})$ & $(f^{\dagger}f^{\dagger}f^{\dagger})$ & $(f^{\dagger}f^{\dagger}f^{\dagger}f^{\dagger}a^{\dagger})$\\
 $(f^{\dagger}a^{\dagger} a^{\dagger})$ & $(f^{\dagger}f^{\dagger} a^{\dagger}a^{\dagger})$& $(f^{\dagger}f^{\dagger} f^{\dagger}a^{\dagger})$& $(f^{\dagger}a^{\dagger})(f^{\dagger}f^{\dagger}f^{\dagger})$\\
 & $(f^{\dagger} a^{\dagger}a^{\dagger}f^{\dagger}a^{\dagger})$& $(f^{\dagger} f^{\dagger}f^{\dagger}a^{\dagger}a^{\dagger})$& $(f^{\dagger}f^{\dagger}f^{\dagger}f^{\dagger}a^{\dagger}a^{\dagger})$\\
 & $(f^{\dagger}a^{\dagger})(f^{\dagger} a^{\dagger}a^{\dagger})$& $(f^{\dagger}a^{\dagger})(f^{\dagger} f^{\dagger}a^{\dagger})$&$(f^{\dagger}a^{\dagger}a^{\dagger})(f^{\dagger}f^{\dagger}f^{\dagger})$\\
 && $(f^{\dagger}a^{\dagger}f^{\dagger}f^{\dagger}a^{\dagger}a^{\dagger})$ &$(f^{\dagger}a^{\dagger})(a^{\dagger}f^{\dagger}f^{\dagger}f^{\dagger})$\\
 && $(f^{\dagger}a^{\dagger})(f^{\dagger}f^{\dagger}a^{\dagger}a^{\dagger})$ & $(f^{\dagger}f^{\dagger}a^{\dagger})(f^{\dagger}f^{\dagger}a^{\dagger})$\\
 && $(f^{\dagger}a^{\dagger}a^{\dagger})(f^{\dagger}f^{\dagger}a^{\dagger})$ & $(f^{\dagger}a^{\dagger}a^{\dagger})(f^{\dagger}f^{\dagger}f^{\dagger}a^{\dagger})$\\
 && $(f^{\dagger}a^{\dagger}a^{\dagger})(f^{\dagger}f^{\dagger}a^{\dagger}a^{\dagger})$ & $(f^{\dagger}f^{\dagger}a^{\dagger})(f^{\dagger}f^{\dagger}a^{\dagger}a^{\dagger})$\\
 &&& $(f^{\dagger}a^{\dagger})(f^{\dagger}a^{\dagger}a^{\dagger})(f^{\dagger}f^{\dagger}a^{\dagger})$\\
 &&& $(f^{\dagger}f^{\dagger}a^{\dagger})(f^{\dagger}a^{\dagger}f^{\dagger}a^{\dagger}a^{\dagger})$\\
\hline
\end{tabular}
\end{center}
\caption{$SU(3)$ fermionic bricks in the sectors with $n_F=1,\dots,4$ fermions. \label{tab. su3_fermionic_bricks}}
\end{table}

The spectra of the Hamiltonian are obtained by diagonalizing numerically the Hamiltonian matrix evaluated in the
cut basis.
Therefore, one introduces a
cut-off, $N_{cut}$, which limits the maximal number of quanta contained in the
basis states.
Once the cut-off is set the Hamiltonian matrix becomes finite and it is possible to evaluate all its elements.
Subsequently, we can diagonalize numerically such matrix, as well as any other matrices corresponding to other physical observables.
The physically reliable results are those obtained in the infinite cut-off limit, called also the continuum limit \cite{korcyl0}.
The procedure requires to perform several
calculations with increasing cut-off and to extrapolate the results to the continuum limit.
It was shown that the eigenenergies corresponding to localized and nonlocalized states behave differently \cite{maciek2}\cite{maciek1}\cite{korcyl_msc}.
The former ones converge rapidly to their physical limit, whereas the latter ones slowly fall to zero.


Our analytic approach uses the cut Fock basis since it provides a systematic control of the Hilbert space. However,
instead of evaluating
the matrix elements of the Hamiltonian operator, one transforms the eigenvalue problem into a problem of finding
a solution of some recursion relation.

The analytic formulae derived in this article are particularly useful since they
describe both the approximate numerical results and the exact continuum solutions.
This is possible because they are parameterized by the cut-off $N_{cut}$.
In the limit of infinite $N_{cut}$ our formulae give the exact solutions, whereas for any finite cut-off they can be directly compared
with numerical results.


\section{Solutions in the bosonic sector}
\label{sec.bosonic}


In this section we present the derivation of the exact bosonic solutions of the $SU(3)$ SYMQM model.
We use theorems presented in appendix \ref{sec.theorems} to solve the recursion relations obtained from the eigenvalue problem.
We classify the solutions into sets and describe their general structure.
For the simplest cases we demonstrate their orthogonality and completeness.

\subsection{Recurrence relation}

A general state $|E\rangle$ from the bosonic sector can be expanded in the basis as
\begin{equation}
|E\rangle = \sum_{2j+3k \le N_{cut}} a_{j,k} \ (a^{\dagger} a^{\dagger})^j  (a^{\dagger} a^{\dagger}a^{\dagger})^k |0\rangle.
\label{eq.expansion}
\end{equation}
It was shown \cite{doktorat_macka} by explicit construction
and by the character method, that the set of states
$\big\{(a^{\dagger} a^{\dagger})^j  (a^{\dagger} a^{\dagger}a^{\dagger})^k |0\rangle \big\}$ containing less then $N_{cut}$
quanta, i.e. ${2j+3k \le N_{cut}}$,
is indeed a complete and linearly independent set of states. It spans the Hilbert space of states with less than $N_{cut}$ bosonic quanta.
However, the Fock basis used in eq.\eqref{eq.expansion} is not orthogonal, which will have some implications for the structure of the solutions.

The requirement that $|E\rangle$ is an eigenstate of the Hamiltonian to the eigenenergy $E$ can be translated to the condition
that $a_{j,k}$ must obey the following
recursion relation (for the details of the derivation see appendix \ref{app. 2}),
\begin{multline}
a_{j-1,k} - \big( 2j + 3k + 4 - 2E \big) a_{j,k} + (j + 1)(j + 3k+4) a_{j+1,k}
\\
+ \frac{3}{8}(k+1)(k+2) a_{j-2,k+2} = 0.
\label{eq.su3.recurrence.rel}
\end{multline}
The coefficients $a_{j,k}$ with $j,k<0$ are set to $0$. Note that the first three terms
involve the same value of the second index of $a_{j,k}$, whereas
the last term mixes the coefficients $a_{j,k}$ with different values of $k$.
However, it is clear that since it only mixes $k$'s with the same
parity, we can consider separately the sets of coefficients $a_{j,k}$ with $k$ even and with $k$ odd. We will use this observation
while classifying all solutions of eq.\eqref{eq.su3.recurrence.rel} in section \ref{subsec. Families of solutions}.

The mixing term is a direct consequence of the fact that the basis states containing the same number
of quanta are not orthogonal. Hence, solving the recursion relation eq.\eqref{eq.su3.recurrence.rel}
corresponds to finding an orthogonal basis spanning the entire Hilbert space of states with
less than $N_{cut}$ quanta. Comparing to the simpler model with $SU(2)$ gauge group, this mixing term is a new feature.
This follows from the fact that the $SU(2)$ basis did not contained two linearly independent states with equal number of quanta.
For models with $SU(N>3)$ gauge groups, there will be many such mixing terms.

\subsection{Generalized Laguerre polynomials}
\label{subsec. generalized Laguerre polynomials}

Before we study in detail the solutions of the recursion eq.\eqref{eq.su3.recurrence.rel},
let us briefly summarize some basic properties of the generalized Laguerre polynomials, which will be needed in the following.

The Laguerre polynomials $\mathcal{L}_n^{\alpha}(x)$ are defined
as the solutions of the differential equation
\begin{equation}
x y'' + (\alpha +1 -x) y' + n y = 0,
\end{equation}
and the orthogonality relation
\begin{equation}
\int_0^{\infty} \mathcal{L}_m^{\alpha}(x) \mathcal{L}_n^{\alpha}(x) x^{\alpha} e^{-x} dx = \delta_{m n}.
\end{equation}
They fulfill the following three-term recursion relation,
\begin{equation}
(n+\alpha)\mathcal{L}^{\alpha}_{n-1}(x) - (2n+\alpha+1-x) \mathcal{L}^{\alpha}_n(x) + (n+1)\mathcal{L}^{\alpha}_{n+1}(x) = 0.
\label{eq.laguerre.recurrence.relation}
\end{equation}
In our problem we will encounter a recursion relation for the \emph{rescaled} Laguerre polynomials $L_n^{\alpha}(x)$, sometimes known in the literature
as the Sonine polynomials, which are defined by,
\begin{equation}
L^{\alpha}_n(x) \equiv \frac{\mathcal{L}^{\alpha}_n(x)}{\Gamma(n+\alpha+1)}.
\end{equation}
Then, the recursion relation eq. \eqref{eq.laguerre.recurrence.relation} becomes
\begin{equation}
L^{\alpha}_{n-1}(x) - (2n+\alpha+1-x)L^{\alpha}_n(x) + (n+1)(n+\alpha+1) L^{\alpha}_{n+1}(x) = 0.
\label{eq.rescaled.recurrence.relation}
\end{equation}


Equation \eqref{eq.rescaled.recurrence.relation} will be called the generalized Laguerre \emph{equation},
when we write it for a single, specific value of the index $n$. Contrary, when we write it for any $n$ and we want to mean by it the set of
all generalized Laguerre equations, we will call it the generalized Laguerre \emph{recursion relation}.

Moreover, we call the \emph{cut} generalized Laguerre recursion relation of order $N_{cut}$, the
set of generalized Laguerre equations with $N_{cut}+1$ variables denoted by $a_0(x), a_1(x), \dots, a_{N_{cut}}(x)$.

To simplify the notation we introduce a vector notation for
the components $a_{j}(x)$ related by the Laguerre three term recursion relation eq. \eqref{eq.rescaled.recurrence.relation}.
Thus we will write
\begin{align}
\textbf{S}^{\alpha}_{j}(x) \cdot \textbf{a}_{j}(x) \equiv a_{j-1}(x) - \big( 2j + \alpha + 1 - x \big) a_{j}(x) +(j+1)(j+\alpha+1) a_{j+1}(x), \nonumber
\end{align}
where by $\textbf{S}^{\alpha}_{j}(x)$ and $\textbf{a}_{j}(x)$ we mean the vectors
\begin{align}
\textbf{S}^{\alpha}_{j}(x) &\equiv \left(
\begin{array}{ccc}
1, & - \big( 2j + \alpha + 1 - x \big), & (j+1)(j+\alpha+1) 
\end{array} \right), \nonumber \\
\textbf{a}_{j}(x) &\equiv
\left(
\begin{array}{c}
a_{j-1}(x) \\
a_{j}(x) \\
a_{j+1}(x)
\end{array} \right). \nonumber
\end{align}

\subsection{Families of solutions}
\label{subsec. Families of solutions}

It turns out that the solutions
of eq.\eqref{eq.su3.recurrence.rel} can be naturally classified
into separate sets, which we call \emph{families}. Families can be
labeled by a single integer $\kappa$. For a given family, $\kappa$ is equal to the maximal number of cubic bricks
contained in the basis states used to construct the solutions within this family.
Additionally we adopt a different notation for even and odd solutions. Therefore, we define:
\begin{itemize}
\item a solution belongs to the set $f_{\kappa}$ if $a_{j,k} \equiv 0, k > 2{\kappa}$
and $a_{j,k} \ne 0, k \le 2{\kappa}$. In words, the eigenstate
can be decomposed into basis states containing an even number of cubic bricks and the maximal number of them is $2{\kappa}$,
\item a solution belongs to the set $g_{\kappa}$ if $a_{j,k} \equiv 0, k > 2{\kappa} + 1$
and $a_{j,k} \ne 0$, $k \le 2{\kappa} + 1$. In words, the eigenstate
can be decomposed into basis states containing an odd number of cubic bricks and the maximal number of them is $2{\kappa} + 1$.
\end{itemize}
$f_0$ is the simplest set of solutions, for which only $a_{j,0}$ are nonzero, i.e.
is only composed of bilinear bricks or in other words it involves only the Kronecker delta invariant
tensor.

We introduce now the quantities $d_{\kappa}(N_{cut})$ and $d_{\kappa}'(N_{cut})$ which denote the number
of states in the $f_{\kappa}$ and $g_{\kappa}$ family respectively at a given
cut-off $N_{cut}$.
$d_0 = \lfloor \frac{N_{cut}}{2} \rfloor + 1$ is the number of states composed exclusively of the quadratic elementary brick $(a^{\dagger} a^{\dagger})$,
namely,
\begin{equation}
|0\rangle, \ (a^{\dagger} a^{\dagger})|0\rangle, \ (a^{\dagger} a^{\dagger})^2|0\rangle, \ \dots, \ (a^{\dagger} a^{\dagger})^{d_0-1}|0\rangle. \nonumber
\end{equation}
Hence, the energy eigenstates from the family $f_0$ will correspond to $d_0$ independent linear combinations of such states.

Correspondingly, there will be $d_0' = \lfloor \frac{N_{cut}-3}{2}\rfloor +1$ states composed with exactly one cubic elementary brick, since there
are $d_0'$ such states in the basis, namely,
\begin{equation}
(a^{\dagger}a^{\dagger}a^{\dagger})|0\rangle, \ (a^{\dagger} a^{\dagger})(a^{\dagger}a^{\dagger}a^{\dagger})|0\rangle, \ (a^{\dagger} a^{\dagger})^2(a^{\dagger}a^{\dagger}a^{\dagger})|0\rangle, \ \dots, \ (a^{\dagger} a^{\dagger})^{d'_0-1}(a^{\dagger}a^{\dagger}a^{\dagger})|0\rangle. \nonumber
\end{equation}
In general, $d_{\kappa}$ and $d_{\kappa}'$ are given by
\begin{align}
d_{\kappa}  = \Big\lfloor \frac{1}{2} \big( N_{cut} -6\kappa \big) \Big\rfloor + 1,  \qquad
d_{\kappa}' = \Big\lfloor \frac{1}{2} \big( N_{cut} -6\kappa -3 \big) \Big\rfloor + 1.
\end{align}

Obviously, for a given cut-off $N_{cut}$ there will be in total $\lfloor \frac{N_{cut}}{3} \rfloor$ nontrivial families of solutions.

\subsection{Generic solutions}

Let us now consider generic families $f_{\kappa}$ and $g_{\kappa}$.
At a given, finite cut-off $N_{cut}$, the family $f_{\kappa}$ consists of $d_{\kappa}(N_{cut})$ solutions.
Similarly, the family $g_{\kappa}$ consists of $d'_{\kappa}(N_{cut})$ solutions.
For each solution the eigenvector and the eigenenergy must be specified.
We describe below, separately
for the $f_{\kappa}$ and $g_{\kappa}$ family, the expressions for the eigenvectors and the quantization conditions for the eigenenergies.
The details of the derivation are described in the appendix \ref{sec.theorems}.

The form of the normalization factors is motivated by the plane-wave normalization discussed in
section \ref{subsubsec. orthogonality}.

\subsubsection{Even number of cubic bricks}

The solutions in the $f_{\kappa}$ family are given by
\begin{multline}
|E_m,\kappa\rangle_{even} = (2E_m)^{3\kappa} e^{-E_m} \sum_{n=0}^{d_{\kappa}-1} L_n^{6\kappa+3}(2E_m) \Big( |n,2\kappa\rangle +
\\
+ \sum_{t=1}^{\kappa} \Gamma^{even}_{\kappa-1,\kappa-t}|n+3t,2\kappa-2t\rangle \Big),
\end{multline}
where the energies $E_m$ are given by the quantization condition
\begin{equation}
L_{d_{\kappa}}^{6\kappa+3}(2E_m) = 0, \qquad 1 \le m \le d_{\kappa}.
\label{eq. quantization condition}
\end{equation}
$\Gamma(x)$ denotes the Gamma function, $\Gamma(x)=(x-1)!$ for $x$ integer and $\Gamma^{even}_{\kappa-1,t}$ is given by
\begin{equation}
\Gamma^{even}_{\kappa-1,t} = (-24)^{t-\kappa}\frac{(\kappa+t)!}{(\kappa-t)!(2t)!}.
\end{equation}
By introducing a more convenient
notation
\begin{equation}
\overline{|n, 2\kappa \rangle} \equiv |n,2\kappa\rangle + \sum_{t=1}^{\kappa} \Gamma^{even}_{\kappa-1,\kappa-t}|n+3t,2\kappa-2t\rangle,
\label{eq. orthonormalized fock states}
\end{equation}
we can rewrite the eigenstates in a compact form as
\begin{align}
|E_m,\kappa\rangle_{even} = (2E_m)^{3\kappa} e^{-E_m} \sum_{n=0}^{d_{\kappa}-1} & L_n^{6\kappa+3}(2E_m) \overline{|n,2\kappa\rangle}.
\end{align}

\subsubsection{Odd number of cubic bricks}

The quantization condition for this family reads
\begin{equation}
L_{d_{\kappa}'}^{6\kappa+6}(2E_m) = 0, \qquad 1 \le m \le d_{\kappa}',
\end{equation}
and the corresponding eigenstates are given by
\begin{multline}
|E_m,\kappa\rangle_{odd} = (2E_m)^{3\kappa+\frac{3}{2}} e^{-E_m} \sum_{n=0}^{d_{\kappa}'-1} L_n^{6\kappa+6}(2E_m) \Big( |n, 2\kappa+1 \rangle +
\\
+ \sum_{t=1}^{\kappa}  \Gamma^{odd}_{\kappa-1,\kappa-t} |n+3t,2\kappa-2t+1\rangle \Big),
\end{multline}
where
\begin{equation}
\Gamma^{odd}_{\kappa-1,t} = (-24)^{t-\kappa}\frac{(\kappa+t+1)!}{(\kappa-t)!(2t+1)!}.
\end{equation}
Again, using an simplified notation,
\begin{equation}
\overline{|n, 2\kappa+1\rangle} \equiv |n, 2\kappa+1 \rangle + \sum_{t=1}^{\kappa}  \Gamma^{odd}_{\kappa-1,\kappa-t} |n+3t,2\kappa-2t+1\rangle
\end{equation}
we can write $|E_m,\kappa\rangle_{odd}$ as
\begin{align}
|E_m,\kappa\rangle_{odd} = (2E_m)^{3\kappa+\frac{3}{2}} e^{-E_m} \sum_{n=0}^{d_{\kappa}'-1} & L_n^{6\kappa+6}(2E_m) \overline{|n, 2\kappa+1 \rangle}.
\end{align}

In the next section we will use the above results to describe the set of all solutions to the eigenvalue problem at finite cut-off.

\subsection{Complete solutions at finite cut-off}

At a given cut-off $N_{cut}$, the set of \emph{all} solutions, $\big\{|E\rangle \big\}_{N_{cut}}$, of
the eigenequation $H|E\rangle = E|E\rangle$ consists of solutions belonging to different families.
As noted in section \ref{subsec. Families of solutions},
there will be representatives of $\big\lfloor \frac{N_{cut}}{3} \big\rfloor$ distinct families.
Hence, $\big\{|E\rangle \big\}_{N_{cut}}$ contains solutions from families $f_{\kappa}$, where $1 \le \kappa \le \kappa_{max} \equiv \lfloor \frac{1}{6} N_{cut} \rfloor$,
and from families $g_{\kappa'}$, where $1 \le \kappa' \le \kappa'_{max} \equiv \lfloor \frac{1}{6} \big( N_{cut} - 3 \big) \rfloor$.
The dependence on
the cut-off is hidden in the integers $d_{\kappa}(N_{cut})$ and $d'_{\kappa'}(N_{cut})$.
We now provide a more detailed description of $\big\{|E\rangle \big\}_{N_{cut}}$.

Since we have $\overline{|n,0\rangle} = |n,0\rangle$
and $\overline{|n,1\rangle} = |n,1\rangle$ there are two families of solutions for
which there is no mixing, namely $f_0$ and $g_0$. They consist of:
\begin{itemize}
\item $d_0$ solutions with $E_m$ such that $L_{d_0}^3(2E_m) = 0$ given by ($1 \le m \le d_0$)
\begin{equation}
|E_m,0\rangle_{even} = e^{-E_m} \sum_{n=0}^{d_0-1} L_n^3(2E_m) \overline{|n,0\rangle}, 
\label{eq. rozwiazanie f0}
\end{equation}
\item $d_0'$ solutions with $E_m$ such that $L_{d_0'}^6(2E_m) = 0$ given by ($1 \le m \le d'_0$)
\begin{equation}
|E_m,0\rangle_{odd} = (2E_m)^{\frac{3}{2}} e^{-E_m} \sum_{n=0}^{d_0'-1} L_n^6(2E_m) \overline{|n,1\rangle}, 
\label{eq. rozwiazanie g0}
\end{equation}
\end{itemize}

For the remaining $\lfloor \frac{N_{cut}}{3}\rfloor -2 $ families of solutions the mixing appears.
The mixing becomes more and more complex as we consider families with a bigger maximal number of cubic bricks.
We present the eigenstates from the first two families with a simple mixing,
and then we give the expressions for eigenstates from the most complicated families.
Hence, we have
\begin{itemize}
\item $d_1$ solutions with $E_m$ such that $L_{d_1}^9(2E_m) = 0$ given by ($1 \le m \le d_1$)
\begin{align}
|E_m,1\rangle_{even} &= (2E_m)^3 e^{-E_m} \sum_{n=0}^{d_1-1} L_n^9(2E_m) \overline{|n,2\rangle},
\label{eq. rozwiazanie f1}
\end{align}
where $\overline{|n,2\rangle} = |n,2\rangle - \frac{1}{24} |n+3,0\rangle$.
\item $d_1'$ solution with $E_m$ such that $L_{d_1'}^{12}(2E_m) = 0$ are given by ($1 \le m \le d'_1$)
\begin{align}
|E_m,1\rangle_{odd} &= (2E_m)^{\frac{9}{2}} e^{-E_m} \sum_{n=0}^{d_1'-1} L_n^{12}(2E_m) \overline{|n,3\rangle}
\label{eq. rozwiazanie g1}
\end{align}
where $\overline{|n,3\rangle} = |n,3\rangle - \frac{1}{12} |n+3,1\rangle$.
\end{itemize}
and
\begin{itemize}
\item $d_{\kappa_{max}}$ solutions with $E_m$ such that $L_{d_{\kappa_{max}}}^{6\kappa_{max}+3}(2E_m) = 0$ given by ($1 \le m \le d_{\kappa_{max}}$)
\begin{align}
|E_m,\kappa_{max}\rangle_{even} = (2E_m)^{3\kappa_{max}} e^{-E_m} \sum_{n=0}^{d_{\kappa_{max}}-1} & L_n^{6\kappa_{max}+3}(2E_m) \overline{|n,2\kappa_{max}\rangle},
\end{align}
\item $d_{\kappa'_{max}}'$ solutions with $E_m$
such that $L_{d_{\kappa'_{max}}'}^{6\kappa'_{max}+6}(2E_m) = 0$ given by ($1 \le m \le d'_{\kappa'_{max}}$)
\begin{align}
|E_m,\kappa'_{max}\rangle_{odd} = (2E_m)^{3\kappa'_{max}+\frac{3}{2}} e^{-E_m} \sum_{n=0}^{d_{\kappa'_{max}}'-1} & L_n^{6\kappa'_{max}+6}(2E_m) \overline{|n, 2\kappa'_{max}+1 \rangle}
\end{align}
\end{itemize}

The structure of solutions is shown graphically in
Figure \ref{fig. siatka}. Coefficients $a_{j,k}$ are represented with dots on the $j - k$ plane.
Equivalently, each dot corresponds to a basis state constructed with $j$ bilinear and $k$ cubic bosonic bricks.
Hence, a cut-off with
a fixed number of quanta, here $N_{cut}=15$, can be shown as the oblique, straight line.
The states lying below and on this line are included
in the Fock basis, whereas those lying outside are not. The solutions of the recursion relation eq.\eqref{eq.su3.recurrence.rel}
are represented by dotted lines. The lowest line corresponds to the solutions involving only
bilinear bricks i.e. the solutions from the set $f_0$. The dashed line represents solutions from the set $f_1$.
The two horizontal parts of this line denote respectively the amplitudes $a_{j,2}$ and $a_{j,0}$.
The mixing of these amplitudes starts at the number of quanta equal to 6, i.e. both the amplitudes $a_{0,2}$ and $a_{3,0}$ contain 6 quanta.
%
\begin{figure}[t]
\begin{center}
\input{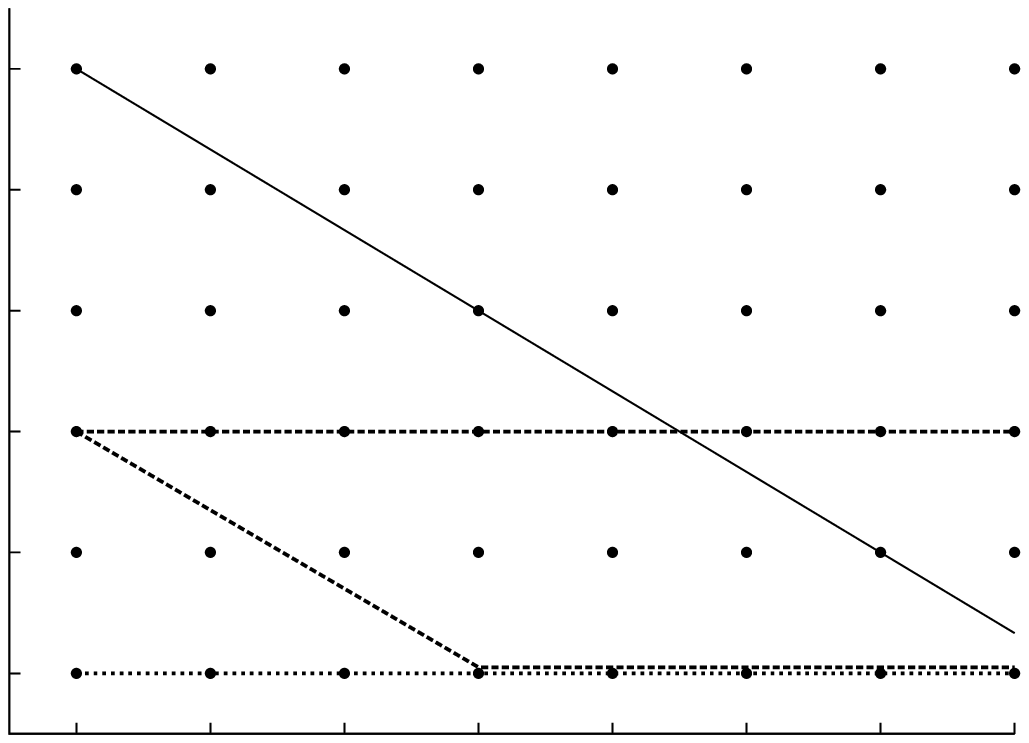}
\caption{The structure of the solutions of the recursion relation eq. \eqref{eq.su3.recurrence.rel}. Each dot represents
a coefficient $a_{j,k}$ with appropriate values of the $j$ and $k$ indices. The solid lines corresponds to a sample
cut-off $N_{cut}=15$, whereas the dotted lines denote solutions as described in the text.
\label{fig. siatka}}
\end{center}
\end{figure}

The complete spectrum, i.e. the set of all values of the $E$ parameter, $\big\{ E \big\}_{N_{cut}}$, for which a nonzero eigenstate exists
can be written 
in a compact form by introducing a polynomial $\Theta^{n_F=0}_{N_{cut}}(E)$. The roots of $\Theta^{n_F=0}_{N_{cut}}(E)$ correspond
to the eigenenergies of the cut Hamiltonian at a given cut-off $N_{cut}$,
\begin{equation}
\big\{ E \big\}_{N_{cut}} = \big\{ E: \ \Theta^{n_F=0}_{N_{cut}}(E)=0 \big\}.
\end{equation}
The polynomial $\Theta_{N_{cut}}^{n_F=0}(E)$ must be equal to the product of all quantization conditions relevant for a given cut-off, hence
\begin{equation}
\Theta^{n_F=0}_{N_{cut}}(E) = \prod_{k=0}^{\lfloor \frac{1}{3} ( N_{cut} ) \rfloor} L_{\lfloor \frac{1}{2} ( N_{cut}-3k ) \rfloor + 1}^{3k+3}(2E).
\label{eq. complete eigenenergies}
\end{equation}
Thus, the expression eq.\eqref{eq. complete eigenenergies} provides a closed formula for the spectrum of Hamiltonian in the bosonic sector for
any finite cut-off.

\subsection{Continuum solutions}
\label{subsec. Continuum solutions}

Although, for any finite $N_{cut}$ the analytic spectra given by eq.\eqref{eq. complete eigenenergies}
match exactly the results obtained by a numerical diagonalization of the Hamiltonian matrix, in order
to retrieve a physical information one has to perform the continuum limit. Numerically, this conforms
to repeat several computations with increasing $N_{cut}$ and extrapolate the results. On the other hand,
having the analytic solutions we can perform the continuum limit exactly.

Remarkably, the solutions obtained in the preceding section
for a finite cut-off are closely related to the continuum solutions.
This can be seen as follows.
Let us consider a set of $n$ recursion relations with mixing such as those given by the recurrence eq.\eqref{eq.su3.recurrence.rel}.
Let us assume that we are interested in the solutions from the family $f_{\kappa}$ which have the quantization condition
of the form
$L_{d_{\kappa}(N_{cut})}^{6{\kappa}+3}(2E) = 0$ with ${\kappa} < n$. They are obtained by setting identically to zero
all $a_{j,k} = 0$ with $k>2{\kappa}$. Their mixing coefficients are given by $\Gamma^{even}_{{\kappa}-1,t}$.
Increasing the cut-off is equivalent to the inclusion of new recursion relations. However, as far as the family
of solutions $f_{\kappa}$ is concerned, the only differences occur in the set of possible eigenenergies,
since they will be now given
by the condition $L_{d_{\kappa}(N'_{cut})}^{6{\kappa}+3}(2E) = 0$, with $N'_{cut}$ being the new cut-off. The mixing coefficients
are given exclusively in terms of parameters describing the recursion relations which does not involve the value of the
cut-off. Therefore, they will be unaffected by its change. A similar reasoning can be applied for the solutions from the families $g_{\kappa}$.

From the argument above, it follows that in order to obtain the formulae for the continuum solutions, one can simply extend
the sum over the basis states to infinity. Hence, for generic eigenstates from families $f_{\kappa}$ and $g_{\kappa}$ we have
\begin{align}
&|E,\kappa\rangle_{even} = (2E)^{3\kappa} e^{-E} \sum_{n=0}^{\infty}  L_n^{6\kappa+3}(2E) \overline{|n,2\kappa\rangle}
\label{eq. continuum solutions even} \\
&|E,\kappa\rangle_{odd} = (2E)^{3\kappa+\frac{3}{2}} e^{-E} \sum_{n=0}^{\infty}  L_n^{6\kappa+6}(2E) \overline{|n, 2\kappa+1 \rangle}
\label{eq. continuum solutions odd}
\end{align}

Some of the properties of these solutions are summarized in the next section.

\subsection{Properties of $|E, \kappa\rangle_{even}$ solutions}

In this section we verify the orthogonality
and completeness
of the solutions derived in the preceding section.
Since we want to demonstrate the mechanisms that
are responsible for these properties we will only deal with even solutions. Therefore we neglect their subscript ${}_{even}$.
The discussion for the odd solutions follows the same lines.

\subsubsection{Orthogonality}
\label{subsubsec. orthogonality}

The scalar product of two eigenstates of the Hamiltonian reads
\begin{multline}
\langle E,\alpha| E',\beta \rangle = e^{-E-E'} (2E)^{3\alpha} (2E')^{3\beta} \ \times
\\
\times \ \sum_{n=0}^{d_{\alpha}-1}\sum_{m=0}^{d_{\beta}-1}
\frac{\mathcal{L}^{6\alpha+3}_{n}(2E) }{\Gamma(n+6\alpha+4)}\frac{\mathcal{L}^{6\beta+3}_{m}(2E')}{\Gamma(m+6\beta+4)}
\overline{\langle n,2\alpha} | \overline{m,2\beta\rangle}.
\end{multline}
The general expression for the scalar products of orthogonalized basis states
$\overline{\langle n,2\alpha} | \overline{m,2\beta\rangle}$,
is not known. However, one can calculate them for few simplest cases (namely, $\alpha=0,1,2$). We find that
\begin{align}
\overline{\langle n, 2\alpha} | \overline{m, 2\beta\rangle} = \delta_{\alpha \beta} \delta_{n m} \mathcal{N}^{\alpha}(n),
\label{eq. normalization constant}
\end{align}
and
\begin{align}
\mathcal{N}^{0}(n) &= \frac{\Gamma(n+1)\Gamma(n+4)}{6}, \nonumber \\
\mathcal{N}^{1}(n) &= \frac{\Gamma(n+1)\Gamma(n+10)}{3456}, \nonumber \\
\mathcal{N}^{2}(n) &= \frac{\Gamma(n+1)\Gamma(n+16)}{1990656}, \nonumber
\end{align}
which can be summarized in a compact form as
\begin{equation}
\mathcal{N}^{\alpha}(n) = \frac{1}{6}\frac{1}{24^{2\alpha}} \Gamma(n+1)\Gamma(n+4+ 6\alpha). \nonumber
\end{equation}
We can immediately conclude that solutions belonging to different families are orthogonal, even if they have the same energy.


As far as the orthogonality of solutions within each family is concerned, we have
\begin{align}
\langle E,\alpha| E',\beta \rangle &=
\delta_{\alpha \beta} e^{-E-E'} (4EE')^{3\alpha} \frac{1}{6}\frac{1}{24^{2\alpha}} \sum_{m=0}^{d_{\alpha}-1} \frac{\mathcal{L}_m^{6\alpha+3}(2E)\mathcal{L}_m^{6\alpha+3}(2E')}{\Gamma(m+6\alpha+4)}m!
\label{eq. scalar product 1}
\end{align}
The finite sum in eq.\eqref{eq. scalar product 1} can be calculated \cite{abramowitz} and yields 
\begin{multline}
\langle E,\alpha | E',\beta \rangle = \delta_{\alpha \beta} e^{-E-E'} (4EE')^{3\alpha} \frac{1}{6}\frac{1}{24^{2\alpha}} \frac{(d_{\alpha})!}{\Gamma(d_{\alpha}+6\alpha+4)} \frac{1}{2( E - E' )} \ \times \\
\times \ \Big(
\mathcal{L}_{d_{\alpha}-1}^{6\alpha+3}(2E) \mathcal{L}_{d_{\alpha}}^{6\alpha+3}(2E') -
\mathcal{L}_{d_{\alpha}}^{6\alpha+3}(2E) \mathcal{L}_{d_{\alpha}-1}^{6\alpha+3}(2E')
\Big).
\end{multline}
Hence, for $E$ and $E'$ fulfilling the quantization conditions,
$\mathcal{L}_{d_{\alpha}}^{6\alpha+3}(2E')=0$ and $\mathcal{L}_{d_{\alpha}}^{6\alpha+3}(2E)=0$, the scalar products reads
\begin{equation}
\langle E,\alpha | E',\beta \rangle = \delta_{\alpha \beta} \delta_{E E'} \frac{1}{12}\frac{1}{24^{2\alpha}} \mathcal{L}_{d_{\alpha}-1}^{6\alpha+3}(2E) \mathcal{L}_{d_{\alpha}-1}^{6\alpha+4}(2E).
\end{equation}
The fact that the eigenstates are orthogonal among each other
is a simple confirmation that we managed to construct a set of orthogonal states in the Hilbert space with even number of quanta.\\

The question of orthogonality of the continuum solutions is more subtle.
The continuum limit of even solutions is given by eq.\eqref{eq. continuum solutions even}.
Following the same steps as for calculations with finite cut-off
and using the formula \cite{abramowitz} for an infinite series of product of two Laguerre polynomials we obtain
\begin{multline}
\langle E,\alpha | E',\beta \rangle = \delta_{\alpha \beta} \frac{1}{6}\frac{1}{24^{2\alpha}} e^{-E-E'} (4EE')^{3\alpha} \ \times \\
\times \ \lim_{z \rightarrow 1^-} \Big[ (1-z)^{-1} \exp\big(-z \frac{2(E+E')}{1-z} \big) (4 E E'z)^{-\frac{6\alpha+3}{2}} I_{6\alpha+3}\Big( \frac{4\sqrt{E E' z}}{1-z} \Big) \Big]
\end{multline}
Introducing $\epsilon=\frac{1}{4}(1-z)$ and exploiting the asymptotic behavior of the Bessel function for large argument, namely
\begin{equation}
I_{\alpha}(x) = \frac{e^x}{\sqrt{2\pi x}}, \qquad \textrm{for large }x,
\end{equation}
we get
\begin{align}
\langle E,\alpha| E',\beta \rangle &=  \delta_{\alpha \beta} \frac{1}{2} (4 E E')^{-\frac{3}{2}} \frac{1}{6}\frac{1}{24^{2\alpha}} \lim_{\epsilon \rightarrow 0^+} \ \frac{1}{2\sqrt{\pi \epsilon}} e^{ - \frac{1}{4\epsilon}(\sqrt{2E} - \sqrt{2E'})^2 }
\label{eq. scalar product derivation}
\end{align}
Eq.\eqref{eq. scalar product derivation} reduces thanks to the well-known Dirac-delta representation
\begin{equation}
\lim_{\epsilon \rightarrow 0^+} \ \frac{1}{2\sqrt{\pi \epsilon}} e^{ - \frac{x^2}{4\epsilon} } = \delta(x)
\end{equation}
to the desired result
\begin{align}
\langle E,\alpha | E',\beta \rangle = \delta_{\alpha \beta}  \frac{1}{12}\frac{1}{24^{2\alpha}} (2 E)^{-3} \delta(\sqrt{2E} - \sqrt{2E'}).
\end{align}
The norm of these eigenstates is divergent, which is in agreement with our expectation that our solutions in the continuum limit should behave like
plane-waves.

\subsubsection{Completeness}

Next we show that the set of states derived above is a correct basis of the Hilbert space.
Such set of solutions must be complete, i.e. any other solution can be written as a linear combination of $|E, \kappa \rangle$.
This can be done by showing that any Fock state can be expressed in terms of the energy solutions. Hence,
the matrix representation of the transformation of the set of Fock states into the set of solutions $|E, \kappa \rangle$, denoted by $T$,
must be nonsingular.

$T$ can be written as a product of two matrices, $T=T_2 T_1$, where $T_1$ represents the matrix representation of the transformation
$\big\{ | j, k \rangle \big\} \rightarrow \big\{ \overline{| j, \kappa \rangle} \big\}$ , and $T_2$ represents the matrix representation of the
transformation $\big\{ \overline{| j, \kappa \rangle} \big\} \rightarrow \big\{ | E_m, \kappa \rangle \big\}$.

It can be seen from the definition of the orthogonalized Fock states eq.\eqref{eq. orthonormalized fock states} that the matrix $T_1$ is a triangular
matrix with all diagonal elements equal to $1$. Hence, its determinant is simply $1$.

The matrix $T_2$ is block diagonal with each block corresponding to a given family of energy eigenstates $| E_m, \kappa \rangle$.
Hence, the determinant of $T_2$ will be given by a product of determinants of the transformation matrices written for each family independently.
To calculate the latter at a finite cut-off, let us consider $d_{\kappa}$ solutions from the $f_{\kappa}$ family. By definition, they can
be obtained from the orthogonalized basis states through the following matrix
\begin{equation}
\left( \begin{array}{c}
|E_1,\alpha\rangle \\
|E_2,\alpha\rangle \\
\vdots \\
|E_{d_{\kappa}},\alpha\rangle \end{array} \right)
=
\left( \begin{array}{ccc}
L^{6\kappa+3}_0(2E_1) &  \dots & L^{6\kappa+3}_{d_{\kappa}-1}(2E_1) \\
L^{6\kappa+3}_0(2E_2) &  \dots & L^{6\kappa+3}_{d_{\kappa}-1}(2E_2) \\
\vdots &  &  \vdots \\
L^{6\kappa+3}_0(2E_{d_{\kappa}}) &  \dots & L^{6\kappa+3}_{d_{\kappa}-1}(2E_{d_{\kappa}})
\end{array} \right)
\left( \begin{array}{c}
\overline{|0,2\kappa \rangle} \\
\overline{|1,2\kappa \rangle} \\
\vdots \\
\overline{|d_{\kappa}-1,2\kappa \rangle} \\
\end{array} \right) \nonumber
\end{equation}
One can show that the corresponding determinant is the Vandermonde determinant depending on energies $E_m$. To see this let us
write the Laguerre polynomials $\mathcal{L}^{6\kappa+3}_n(E)$ as
\begin{equation}
L^{6\kappa+3}_n(2E) = \frac{1}{n! \Gamma(n+6\kappa+4)} (2E)^n + \sum_{k=0}^{n-1} c_{n,k} L^{6\kappa+3}_k(2E) \nonumber
\end{equation}
where $c_{n,k}$ are some constants. Then, adding to each column an appropriate linear combination
of preceding columns, starting by the last one, and ending on the second one, we get the determinant
\begin{multline}
\left| \begin{array}{cccc}
\frac{1}{\Gamma(6\kappa+4)} & \frac{1}{\Gamma(6\kappa+5)} 2E_1  & \dots & \frac{1}{(d_{\kappa}-1)! \Gamma(6\kappa+3+m)}(2E_1)^{d_{\kappa}-1} \\
\frac{1}{\Gamma(6\kappa+4)} & \frac{1}{\Gamma(6\kappa+5)} 2E_2  & \dots & \frac{1}{(d_{\kappa}-1)! \Gamma(6\kappa+3+m)}(2E_2)^{d_{\kappa}-1} \\
 & \vdots  & \vdots &  \\
\frac{1}{\Gamma(6\kappa+4)} & \frac{1}{\Gamma(6\kappa+5)} 2E_{d_{\kappa}}  & \dots & \frac{1}{(d_{\kappa}-1)! \Gamma(6\kappa+3+m)}(2E_{d_{\kappa}})^{d_{\kappa}-1}
\end{array} \right| = \nonumber \\
= \Big( \prod_{k=0}^{d_{\kappa}-1} \frac{1}{k! \ \Gamma(6\kappa+4+k)} \Big) \prod_{1 \le i < j \le d_{\kappa}} (2E_j - 2E_i)
\end{multline}
Recall, that the energies $E_m$ are given by the quantization condition $\mathcal{L}_{d_{\kappa}}^{6\kappa+3}(2E) = 0$.
The roots of the Laguerre polynomial are simple, i.e. the set of roots does not contain two equal numbers. Hence, the above
Vandermonde determinant is nonzero.

In order to extend this result to the continuum limit
it is necessary to prove that the Laguerre polynomials in the limit of infinite order have only simple roots. This can be done using
the following formula \cite{abramowitz},
\begin{equation}
\lim_{n \rightarrow \infty} n^{-6\kappa-3} \mathcal{L}_n^{6\kappa+3} \big(\frac{z}{n}\big) = z^{-\frac{6\kappa+3}{2}} J_{6\kappa+3}(2 \sqrt{z}).
\end{equation}
The Bessel function of the first kind, $J_{6\kappa+3}(2 \sqrt{z})$, has countably many simple zeros.
It follows that the zeros of the Laguerre polynomials in the limit of infinite order are also simple.
Therefore, the infinite Vandermonde determinant will be also nonzero.

The determinant of the $T$ matrix is the product of determinants of $T_1$ and $T_2$ matrices. Hence, we showed
that it is nonzero for both finite and infinite cut-off.
This proves that the transformation matrix $T$ is nonsingular and therefore the set of solutions $\big\{|E_m,\kappa\rangle\big\}$
is a correct, orthogonal basis of the Hilbert space for the system with the $SU(3)$ gauge symmetry.

\section{General recurrence relation in fermionic sectors}
\label{sec. General recurrence relation in the fermionic sectors of the $SU(3)$ model}

In this section we derive the recursion relation for the components of energy eigenstates in the Fock basis in the fermionic sectors.
We find a general recurrence in a sector with $n_F$ fermions, and then specialize to the case of $n_F=1$ in the following section.

In order to proceed we must introduce some notation for the fermionic bricks
such as those presented in table \ref{tab. su3_fermionic_bricks}. Let us denote
a generic fermionic brick in the sector with $n_F$ fermionic quanta by $C^{\dagger}(n_B^{\alpha},n_F,\alpha)$.
The index $\alpha$ labels all fermionic bricks in this sector and takes values in $1 \le \alpha \le d^{n_F}$, where $d^{n_F}$
is the total number of fermionic
bricks the $n_F$ sector. $n_B^{\alpha}$ designs the number of bosonic creation operators contained in the $\alpha$-th fermionic brick.

The basis in the sector with $n_F$ fermions is composed of the following states
\begin{equation}
|j, k, \alpha \rangle \equiv C^{\dagger}(n_B^{\alpha}, n_F, \alpha) \ (a^{\dagger} a^{\dagger})^j (a^{\dagger} a^{\dagger} a^{\dagger})^k |0 \rangle,
\end{equation}
Hence, a general state can be decomposed as
\begin{equation}
|E\rangle = \sum_{\alpha=1}^{d^{n_F}} \ \sum_{2j+3k \le N_{cut}-n_B^{\alpha}} a^{\alpha}_{j,k} \ | j, k, \alpha \rangle
\end{equation}
Let us rewrite the eigenequation for $H$ in terms of the coefficients $a^{\alpha}_{j,k}$, namely
\begin{multline}
\big( H - E \big)|E\rangle = \\  \sum_{\alpha=1}^{d^{n_F}} \sum_{2j+3k \le N_{cut} - n_B^{\alpha}}
\ a^{\alpha}_{j,k} \ \Big( \big[ H, C^{\dagger}(n_B^{\alpha}, n_F, \alpha) \big] 
+ C^{\dagger}(n_B^{\alpha}, n_F, \alpha) \  H \Big) |j,k\rangle 
= 0. \nonumber
\end{multline}
Evaluation of the commutator yields
\begin{equation}
\big[ H, C^{\dagger}(n_B^{\alpha}, n_F, \alpha) \big] 
= \frac{1}{2}n_B^{\alpha}  C^{\dagger}(n_B^{\alpha}, n_F, \alpha) - \frac{1}{2} \big[ (aa) , C^{\dagger}(n_B^{\alpha}, n_F, \alpha) \big]
\end{equation}
We must now evaluate the commutator $\big[ (aa) , C^{\dagger}(n_B^{\alpha}, n_F, \alpha) \big]$.
For a particular $C^{\dagger}(n_B^{\alpha}, n_F, \alpha)$ it will be given by a sum of $n_B^{\alpha}$ terms,
each equal to
$C^{\dagger}(n_B^{\alpha}, n_F, \alpha)$ with one of the bosonic creation operators replaced
by a bosonic annihilation operator. Let us denote these terms by $G^t_{\alpha}$,
where the index $t$ runs from $1$ to $n_B^{\alpha}$,
\begin{equation}
\big[ (aa) , C^{\dagger}(n_B^{\alpha}, n_F, \alpha) \big] = \sum_{t=1}^{n_B^{\alpha}} G^t_{\alpha}.
\end{equation}
In order to get rid of the annihilation operators contained in $G^t_{\alpha}$, we would like to push them over the creation operators
$(a^{\dagger} a^{\dagger})^j (a^{\dagger} a^{\dagger} a^{\dagger})^k$,
so that they hit the Fock vacuum. We have
\begin{equation}
\forall_t \ \big[ G^t_{\alpha}, (a^{\dagger} a^{\dagger})^j (a^{\dagger} a^{\dagger} a^{\dagger})^k \big] =
(a^{\dagger} a^{\dagger})^j \big[ G^t_{\alpha}, (a^{\dagger} a^{\dagger} a^{\dagger})^k \big] +
\big[ G^t_{\alpha}, (a^{\dagger} a^{\dagger})^j \big] (a^{\dagger} a^{\dagger} a^{\dagger})^k
\end{equation}
Since, $G^t_{\alpha}$ contains exactly one annihilation operator, one can simplify the above commutators, as
\begin{align}
\forall_t \ \big[ G^t_{\alpha}, (a^{\dagger} a^{\dagger})^j \big] &= j \ (a^{\dagger} a^{\dagger})^{j-1} \big[ G^t_{\alpha}, (a^{\dagger} a^{\dagger})\big],  \\
\forall_t \ \big[ G^t_{\alpha}, (a^{\dagger} a^{\dagger} a^{\dagger})^k \big] &=
k \ (a^{\dagger} a^{\dagger} a^{\dagger})^{k-1} \big[ G^t_{\alpha}, (a^{\dagger} a^{\dagger} a^{\dagger})\big].
\end{align}
Therefore,
\begin{align}
\forall_t \ \big[ G^t_{\alpha}, (a^{\dagger} a^{\dagger})^j (a^{\dagger} a^{\dagger} a^{\dagger})^k \big] & =
k \ \big[ G^t_{\alpha}, (a^{\dagger} a^{\dagger} a^{\dagger})\big] \ (a^{\dagger} a^{\dagger})^j  (a^{\dagger} a^{\dagger} a^{\dagger})^{k-1}  \nonumber \\
& + j \ \big[ G^t_{\alpha}, (a^{\dagger} a^{\dagger})\big] \ (a^{\dagger} a^{\dagger})^{j-1} (a^{\dagger} a^{\dagger} a^{\dagger})^k
\end{align}
Eventually, it is easy to evaluate $\big[ G^t_{\alpha}, (a^{\dagger} a^{\dagger})\big]$ since this simply yields
back the $C^{\dagger}(n_B^{\alpha}, n_F, \alpha)$ composite fermionic brick. Thus, we obtain
\begin{multline}
\sum_{\alpha=1}^{d^{n_F}} \sum_{2j+3k \le N_{cut}-n_B^{\alpha}} \Big\{ C^{\dagger}(n_B^{\alpha}, n_F, \alpha) \Big( \textbf{S}^{3k+3+n_B^{\alpha}}
\cdot \textbf{a}^{\alpha}_{j,k} \\ + \frac{3}{8}(k+1)(k+2) a^{\alpha}_{j-2,k+2} \Big) |j,k\rangle \\
 +  \sum_{t=1}^{n_B^{\alpha}} a^{\alpha}_{j,k} \big( k
\ \big[ G^t_{\alpha}, (a^{\dagger} a^{\dagger} a^{\dagger})\big] \ |j,k-1\rangle + (a^{\dagger} a^{\dagger})^j (a^{\dagger} a^{\dagger} a^{\dagger})^k G^t_{\alpha} |0 \rangle \big) \Big\} = 0  \\
\end{multline}
Remarkably, we recover a Laguerre recursion relation for the $a^{\alpha}_{j,k}$ coefficients with
an index shifted by $n_B^{\alpha}$ compared to the bosonic
case. The mixing terms produced by $\big[ G^t_{\alpha}, (a^{\dagger} a^{\dagger} a^{\dagger})\big]$
mix coefficients $a^{\alpha}_{j,k}$ with different $\alpha$.
Note that
$\big[ G^t_{\alpha}, (a^{\dagger} a^{\dagger} a^{\dagger})\big]$ cannot be proportional
to $C^{\dagger}(n_B^{\alpha}, n_F, \alpha)$ since it contains one additional bosonic creation
operator, whereas the part of $G^t_{\alpha}$ which does not annihilate the vacuum misses one
 creation operator compared to $C^{\dagger}(n_B^{\alpha}, n_F, \alpha)$.
Therefore, the two last terms of the recursion relation are true mixing terms.
We will investigate in details these recurrence relations in a particular case of the sector $n_F=1$ in the next section.

\section{Solutions in the $n_F=1$ sector}
\label{sec. Solutions in the $n_F=1$ sector}

In the sector $n_F=1$ we have two fermionic bricks, namely,
\begin{equation}
(f^{\dagger} a^{\dagger}) \equiv C^{\dagger}(1, 1, 1), \qquad (f^{\dagger} a^{\dagger} a^{\dagger}) \equiv C^{\dagger}(2, 1, 2).
\end{equation}
Hence we have,
\begin{align}
\begin{split}
\big[ (aa) , (f^{\dagger} a^{\dagger}) \big] & =  (f^{\dagger} a) \equiv G_1^1,  \\
\big[ (aa) , (f^{\dagger} a^{\dagger}a^{\dagger}) \big] &= (f^{\dagger} a^{\dagger} a) + (f^{\dagger} a a^{\dagger}) \equiv G_2^1 + G_2^2 .
\end{split}
\end{align}
Then,
\begin{align}
\begin{split}
\big[ G^1_1 , (a^{\dagger} a^{\dagger})^j (a^{\dagger} a^{\dagger} a^{\dagger})^k \big] &= \frac{3k}{2} (f^{\dagger} a^{\dagger} a^{\dagger}) (a^{\dagger} a^{\dagger})^j (a^{\dagger} a^{\dagger} a^{\dagger})^{k-1} \\
&\ + j \ (f^{\dagger} a^{\dagger}) (a^{\dagger} a^{\dagger})^{j-1}  (a^{\dagger} a^{\dagger} a^{\dagger})^k, \\
\big[ G_2^1, (a^{\dagger} a^{\dagger})^j (a^{\dagger} a^{\dagger} a^{\dagger})^k \big] &= \frac{k}{4} (f^{\dagger} a^{\dagger}) (a^{\dagger} a^{\dagger})^{j+1} (a^{\dagger} a^{\dagger} a^{\dagger})^{k-1}  \\
&\ + j \ (f^{\dagger} a^{\dagger} a^{\dagger}) (a^{\dagger} a^{\dagger})^{j-1}  (a^{\dagger} a^{\dagger} a^{\dagger})^k  \\
& = \big[ G_2^2, (a^{\dagger} a^{\dagger})^j (a^{\dagger} a^{\dagger} a^{\dagger})^k \big]  
\end{split}
\end{align}
This gives the set of recurrences,
\begin{align}
\begin{split}
\textbf{S}^{3k+4}_{j} \cdot \textbf{a}^1_{j,k}
+ \frac{3}{8} (k+1)(k+2) \ a^1_{j-2,k+2} + \frac{k+1}{2} \ a^2_{j-1,k+1} = 0, \\
\label{eq. f1 recurrence relation}
\textbf{S}^{3k+5}_{j} \cdot \textbf{a}^2_{j,k}
+ \frac{3}{8} (k+1)(k+2) \ a^2_{j-2,k+2} + \frac{3(k+1)}{2} \ a^1_{j,k+1} = 0.
\end{split}
\end{align}

The next step is to decompose these recursion relation into even and odd parts. %

The odd part contains equations
for $a^1_{j,k}$ with $k$ even and $a^2_{j,k}$ with $k$ odd. Therefore, we set $k=2m$ and $k=2m+1$ respectively
in the first and second recursion relation of eqs.\eqref{eq. f1 recurrence relation} and obtain
\begin{align}
\textbf{S}^{6m+4}_{j} \cdot \textbf{a}^1_{j,2m}
+ \frac{3}{8}(2m+1)(2m+2) \ a^1_{j-2,2m+2} + \frac{2m+1}{2} \ a^2_{j-1,2m+1} = 0, \nonumber \\
\textbf{S}^{6m+8}_{j} \cdot \textbf{a}^2_{j,2m+1}
+ \frac{3}{8}(2m+2)(2m+3) \ a^2_{j-2,2m+3} + 3(m+1) \ a^1_{j,2m+2} = 0. \nonumber
\end{align}
Since, the coefficients $a^1_{j,2m}$ appears always with an even second index, and the coefficients $a^2_{j,2m+1}$ appear always with
an odd second index, the superscript becomes redundant and can be omitted. Hence, we write $a^1_{j,2m} \rightarrow a_{j}^{2m}$
and $a^2_{j,2m+1} \rightarrow a_{j}^{2m+1}$ and get
\begin{align}
\begin{split}
 \textbf{S}^{6m+4}_{j} \cdot \textbf{a}_{j}^{2m}
 + \frac{2m+1}{2} \ a_{j-1}^{2m+1} + \frac{3}{8}(2m+1)(2m+2) \ a_{j-2}^{2m+2} = 0,  \\
 \label{eq. f1_recurrence relation odd}
 \textbf{S}^{6m+8}_{j} \cdot \textbf{a}_{j}^{2m+1}
 + 3(m+1) \ a_{j}^{2m+2} + \frac{3}{8}(2m+2)(2m+3) \ a_{j-2}^{2m+3} = 0.
\end{split}
\end{align}

We proceed similarly for the remaining set of equations. After setting \mbox{$k=2m+1$} and \mbox{$k=2m$} in the first
and second recursion relation of eqs.\eqref{eq. f1 recurrence relation} and changing the notation
$a^2_{j,2k} \rightarrow a^{2k}_j$ and $a^1_{j,2k+1} \rightarrow a^{2k+1}_j$, we have
\begin{align}
\begin{split}
 \textbf{S}^{6m+5}_{j} \cdot \textbf{a}_{j}^{2m}
 + \frac{3(2m+1)}{2} \ a_{j}^{2m+1} + \frac{3}{8} (2m+1)(2m+2) \ a_{j-2}^{2m+2} = 0,  \\
  \label{eq. f1_recurrence relation even}
 \textbf{S}^{6m+7}_{j} \cdot \textbf{a}_{j}^{2m+1}
 + (m+1) \ a_{j-1}^{2m+2} + \frac{3}{8}(2m+2)(2m+3) \ a_{j-2}^{2m+3} = 0.
\end{split}
\end{align}

Before writing the expressions for the solutions
of the above recurrence relations, we discuss the general structure of the solutions.

\subsection{Families of solutions}

In a similar manner as we did in the bosonic case, it turns out to be possible to
classify the solutions in the $n_F=1$ sector into families.
We have just argued that parity is a good quantum number.
Furthermore, the set of solutions of the same parity can be divided into two distinct sets of families.

We adopt the following notation:
\begin{itemize}
\item a solution belongs to the family $g^1_{\kappa}$ if it is composed of basis states with an \emph{odd} number of quanta and
the basis state with the maximal number of cubic bricks equal to $2{\kappa}$, is proportional to the $(f^{\dagger}a^{\dagger})$ brick,
\item a solution belongs to the family $g^2_{\kappa}$ if it is composed of basis states with an \emph{odd} number of quanta and
the basis state with the maximal number of cubic bricks equal to $2{\kappa}+1$, is proportional to the $(f^{\dagger}a^{\dagger}a^{\dagger})$ brick.
\end{itemize}
The recursion relation for the families $g^1_{\kappa}$ and $g^2_{\kappa}$ are given by eqs.\eqref{eq. f1_recurrence relation odd}.
\begin{itemize}
\item a solution belongs to the family $f^1_{\kappa}$ if it is composed of basis states with an \emph{even} number of quanta and
the basis state with the maximal number of cubic bricks equal to $2{\kappa}+1$, is proportional to the $(f^{\dagger}a^{\dagger})$ brick,
\item a solution belongs to the family $f^2_{\kappa}$ if it is composed of basis states with an \emph{even} number of quanta and
the basis state with the maximal number of cubic bricks equal to $2{\kappa}$, is proportional to the $(f^{\dagger}a^{\dagger}a^{\dagger})$ brick.
\end{itemize}
The recursion relation for the families $f^1_{\kappa}$ and $f^2_{\kappa}$ are given by eqs.\eqref{eq. f1_recurrence relation even}.\\

One can easily calculate the number of solutions within each family,
by counting the number of basis states fulfilling appropriate conditions.
Hence, in analogy to the bosonic case we define $d^1_{\kappa}(N_{cut})$ and $d^2_{\kappa}(N_{cut})$
as the numbers of solutions in the families $f^1_{\kappa}$ and $f^2_{\kappa}$ at cut-off $N_{cut}$.
Similarly, $d'^1_{\kappa}(N_{cut})$ and $d'^2_{\kappa}(N_{cut})$ denote the numbers
of solutions in the families $g^1_{\kappa}$ and $g^2_{\kappa}$ at cut-off $N_{cut}$.
They are given by
\begin{align}
d'^1_{\kappa}(N_{cut}) = \Big\lfloor \frac{1}{2} \big( N_{cut}-6\kappa-1) \big) \Big\rfloor + 1,  & \qquad
d'^2_{\kappa}(N_{cut}) = \Big\lfloor \frac{1}{2} \big( N_{cut}-6\kappa-5) \big) \Big\rfloor + 1, \\
d^1_{\kappa}(N_{cut}) = \Big\lfloor \frac{1}{2} \big( N_{cut}-6\kappa-4) \big) \Big\rfloor + 1, & \qquad
d^2_{\kappa}(N_{cut}) = \Big\lfloor \frac{1}{2} \big( N_{cut}-6\kappa-2) \big) \Big\rfloor + 1.
\end{align}
Obviously, for a given cut-off $N_{cut}$ there will be in total $\lfloor \frac{N_{cut}-1}{3} \rfloor + \lfloor \frac{N_{cut}-2}{3} \rfloor$ families of solutions.

\subsection{Generic solution and the complete set of solutions at finite cut-off}

In this section we use theorem \ref{th.szoste} from appendix \ref{sec.theorems} to write down the solutions to the
recursion relations eqs.\eqref{eq. f1_recurrence relation odd} and eqs.\eqref{eq. f1_recurrence relation even}.
We treat a generic cases
of solutions belonging to the families $g^1_{\kappa}$ and $g^2_{\kappa}$ and
then to the families $f^1_{\kappa}$ and $f^2_{\kappa}$.
Eventually, we describe the complete set of solutions at finite cut-off.

\subsubsection{Solutions with odd number of quanta}

The quantization condition for the family $g^1_{\kappa}$ reads
\begin{equation}
L_{d'^1_{\kappa}}^{6\kappa+4}(2E_m) = 0, \qquad 1 \le m \le d_{\kappa}'^1,
\end{equation}
and the corresponding eigenstates are given by
\begin{multline}
|E_m,\kappa,1\rangle_{odd} = (2E_m)^{3\kappa+\frac{1}{2}} e^{-E_m} \sum_{n=0}^{d'^1_{\kappa}-1} L_n^{6\kappa+4}(2E_m)
\Big\{ (f^{\dagger}a^{\dagger})|n,2\kappa\rangle + \\
+ \sum_{p=1}^{\kappa} \Big( \Gamma^{odd}_{2\kappa, 2\kappa-2p+1} (f^{\dagger}a^{\dagger}a^{\dagger})|n+3p-2,2\kappa-2p+1\rangle + \\
+ \Gamma^{odd}_{2\kappa,2\kappa-2p} (f^{\dagger}a^{\dagger})|n+3p,2\kappa-2p\rangle \Big) \Big\}.
\end{multline}
Similarly, the quantization condition for the family $g^2_{\kappa}$ reads
\begin{equation}
L_{d'^2_{\kappa}}^{6\kappa+8}(2E_m) = 0, \qquad 1 \le m \le d_{\kappa}'^2,
\end{equation}
with the eigenstates if the form,
\begin{multline}
|E_m,\kappa,2\rangle_{odd} = (2E_m)^{3\kappa+\frac{5}{2}} e^{-E_m} \sum_{n=0}^{d'^2_{\kappa}-1} L_n^{6\kappa+8}(2E_m)
\Big\{ (f^{\dagger}a^{\dagger}a^{\dagger})|n,2\kappa+1\rangle + \\
+ \sum_{p=1}^{\kappa} \Big( \Gamma^{odd}_{2\kappa+1,2\kappa-2p+2} (f^{\dagger}a^{\dagger})|n+3p-1,2\kappa-2p+2\rangle + \\
+ \Gamma^{odd}_{2\kappa+1,2\kappa-2p+1} (f^{\dagger}a^{\dagger}a^{\dagger})|n+3p,2\kappa-2p+1\rangle \Big) \Big\}.
\end{multline}
$\Gamma^{odd}_{m,p}$ are given by recursive relations in theorem \ref{th.szoste} presented in appendix \ref{sec.theorems}.

\subsubsection{Solutions with even number of quanta}

In this case, the quantization condition for the family $f^1_{\kappa}$ reads
\begin{equation}
L_{d^1_{\kappa}}^{6\kappa+7}(2E_m) = 0, \qquad 1 \le m \le d^1_{\kappa},
\end{equation}
and the corresponding eigenstate is of the form
\begin{multline}
|E_m,\kappa,1\rangle_{even} = (2E_m)^{3\kappa+2} e^{-E_m} \sum_{n=0}^{d^1_{\kappa}-1} L_n^{6\kappa+7}(2E_m)
\Big\{ (f^{\dagger}a^{\dagger})|n,2\kappa +1\rangle + \\
+ \sum_{p=1}^{\kappa} \Big( \Gamma^{even}_{2\kappa+1,2\kappa-2p} (f^{\dagger}a^{\dagger}a^{\dagger})|n+3p+1,2\kappa-2p\rangle + \\
+ \Gamma^{even}_{2\kappa+1,2\kappa-2p+1} (f^{\dagger}a^{\dagger})|n+3p,2\kappa-2p+1\rangle \Big) \Big\}.
\end{multline}
The quantization condition for the family $f^2_{\kappa}$ is given by
\begin{equation}
L_{d^2_{\kappa}}^{6\kappa+5}(2E_m) = 0, \qquad 1 \le m \le d^2_{\kappa},
\end{equation}
and the eigenstate reads
\begin{multline}
|E_m,\kappa, 2\rangle_{even} = (2E_m)^{3\kappa+1} e^{-E_m} \sum_{n=0}^{d^2_{\kappa}-1} L_n^{6\kappa+5}(2E_m)
\Big\{ (f^{\dagger}a^{\dagger}a^{\dagger})|n,2\kappa\rangle + \\
+ \sum_{p=1}^{\kappa} \Big( \Gamma^{even}_{2\kappa,2\kappa-2p+1} (f^{\dagger}a^{\dagger})|n+3p-1,2\kappa-2p+1\rangle + \\
+ \Gamma^{even}_{2\kappa,2\kappa-2p} (f^{\dagger}a^{\dagger}a^{\dagger})|n+3p,2\kappa-2p\rangle \Big) \Big\}.
\end{multline}
Again, $\Gamma^{even}_{m,p}$ are as given in theorem \ref{th.szoste}.

\subsubsection{Complete set of solutions at finite cut-off}

In a similar manner as in the case of the bosonic sector the complete set of solutions at a given cut-off $N_{cut}$,
$\big\{ | E \rangle \big\}_{N_{cut}}$ is
given by the union of all solutions from nonempty families.
We write explicitly only the $d'^1_0(N_{cut})$ odd and $d_0^2(N_{cut})$ even simplest eigenstates with no mixing, namely,
\begin{equation}
|E_m,0,1 \rangle_{odd} = (2E_m)^{\frac{1}{2}} e^{-E_m} \sum_{n=0}^{d'^1_0-1} L_n^{4}(2E_m) (f^{\dagger}a^{\dagger}) |n,0\rangle,
\label{eq. fermionic 1}
\end{equation}
with
\begin{equation}
L^4_{d'^1_0}(2E_m)=0, \qquad 1 \le m \le d'^1_0, \nonumber
\end{equation}
and
\begin{equation}
|E_m,0,2 \rangle_{even} = (2E_m) e^{-E_m} \sum_{n=0}^{d^2_0-1} L_n^{5}(2E_m) (f^{\dagger}a^{\dagger}a^{\dagger}) |n,0\rangle,
\label{eq. fermionic 2}
\end{equation}
with
\begin{equation}
L^5_{d_0^2}(2E_m)=0, \qquad 1 \le m \le d_0^2. \nonumber
\end{equation}

The spectrum, i.e. the set of all values of the $E$ parameter, $\big\{ E \big\}_{N_{cut}}$,
for which a nonzero eigenstate exists, can be written in a compact form using the polynomial $\Theta^{n_F=1}_{N_{cut}}(E)$. The latter can
be deduced to be of the form
\begin{multline}
\Theta_{N_{cut}}^{n_F=1}(E) = \Big[\prod_{j=0}^{\lfloor \frac{1}{3} ( N_{cut}-1 ) \rfloor} L_{\lfloor \frac{1}{2} ( N_{cut}-3j-1 ) \rfloor + 1}^{3j+4}(2E) \Big] \ \times \\
 \times \ \Big[ \prod_{k=0}^{\lfloor \frac{1}{3} ( N_{cut}-2 ) \rfloor} L_{\lfloor \frac{1}{2} ( N_{cut}-3k-2 ) \rfloor + 1}^{3k+5}(2E) \Big].
\label{eq. complete eigenenergies fermionic}
\end{multline}
Hence, again, the expression eq.\eqref{eq. complete eigenenergies fermionic}
provides a closed formula for the spectrum of Hamiltonian in the fermionic sector for
any finite cut-off.

\subsection{Continuum solutions}

The argument described in section \ref{subsec. Continuum solutions} applies as well to the case
of recursion relations eqs.\eqref{eq. f1 recurrence relation}. Therefore,
one can obtain exact, continuum solutions from the solutions described in the preceding section, by simply extending
the sums to infinity, i.e. solutions eqs.\eqref{eq. fermionic 1} and \eqref{eq. fermionic 2} become
\begin{align}
\begin{split}
|E,0,1 \rangle_{odd} = (2E)^{\frac{1}{2}} e^{-E} \sum_{n=0}^{\infty} L_n^{4}(2E) (f^{\dagger}a^{\dagger}) |n,0\rangle, \\
|E,0,2 \rangle_{even} = (2E) e^{-E} \sum_{n=0}^{\infty} L_n^{5}(2E) (f^{\dagger}a^{\dagger}a^{\dagger}) |n,0\rangle.
\end{split}
\end{align}
The energy quantization conditions in the continuum limit allow energies that lie on
the whole real, positive axis. 

\section{Spectra in all fermionic sectors}
\label{sec. Spectra in all fermionic sectors}

In section \ref{sec. General recurrence relation in the fermionic sectors of the $SU(3)$ model}
we derived a general recurrence relation valid in all fermionic sectors.
Remarkably, it has the form of a set of generalized Laguerre recurrence relations coupled by a number of mixing terms.
However, for any finite cut-off $N_{cut}$ there always exists one homogeneous Laguerre equation for which the mixing vanishes since it involves
terms with a number of quanta beyond $N_{cut}$. By solving these homogeneous equations and then proceeding
with the remaining equations (in analogy to the proofs of theorems \ref{th.trzecie} and \ref{th.szoste}),
it is possible to obtain solutions of all recursions in all fermionic sectors.
In particular, the corollary \ref{th.piate}, which justifies the existence of a compact polynomial $\Theta_{N_{cut}}^{n_F}(E)$
which roots corresponds to the eigenenergies of the Hamiltonian, can be extended to all sectors.

The polynomial $\Theta_{N_{cut}}^{n_F}(E)$ is equal to the product of quantization conditions of all nonempty families
of solutions at any finite cut-off $N_{cut}$ in a given fermionic sector.
Each quantization condition is characterized by two integers, $n$ and $\gamma$, and has the form $L_{n}^{\gamma}(2E)=0$.
The index $\gamma$ is given by the general recursion relation, i.e. $\gamma = 3k+3+n_B^{\alpha}(n_F)$, where for
each $i$, the index $k$ enumerates the families of a specific type and
is bounded by $0 \le k \le \lfloor \frac{1}{3} \big( N_{cut}-n_B^{\alpha}(n_F) \big) \rfloor$.
The index $n$ corresponds to the number of solutions within each family and it is equal to
$n = \Big\lfloor \frac{1}{2} \big( N_{cut}-3k-n_B^{\alpha}(n_F) \big) \Big\rfloor + 1$.
Hence, the general polynomial can be written as
\begin{equation}
\Theta_{N_{cut}}^{n_F}(E) = \prod_{\alpha=1}^{d^{n_F}} \Bigg( \prod_{k=0}^{\lfloor \frac{1}{3} ( N_{cut}-n_B^{\alpha}(n_F) ) \rfloor} L_{\lfloor \frac{1}{2} ( N_{cut}-3k-n_B^{\alpha}(n_F) ) \rfloor + 1}^{3k+3+n_B^{\alpha}(n_F)}(2E) \Bigg).
\end{equation}

Therefore, in order to be able to determine the spectra the numbers $n_B^{\alpha}(n_F)$ must be known.
In the case of the $SU(3)$ model they can be simply read from the table \ref{tab. su3_fermionic_bricks}.
The above prescription for the spectra was crosschecked with independent numerical calculations
up to cut-off $N_{cut} \le 40$, in all fermionic sectors $0 \le n_F \le 4$. Both
computations yielded exactly the same numbers.

\section{Conclusions}
\label{sec. Conclusions}

In this article we have studied the model of $D=2$, Supersymmetric Yang-Mills Quantum Mechanics with a $SU(3)$ gauge group. We have proposed
a method for solving this system. It applies as well to the bosonic sector as to the
fermionic ones. We have derived explicit formulas for the eigenstates in the
bosonic sector and in the sector with one fermion by exactly
solving a recursion relation involving mixing terms. Such mixing terms are a novel, characteristic feature
of the $SU(N>2)$, SYMQM models. Moreover, we obtained expressions giving the energy spectrum in all fermionic sectors which are parameterized by the
cut-off. Hence, on one hand, one has an analytic confirmation of the correctness of numerical calculations, which are always done at finite cut-off.
On the
other hand, one can easily and in a precise way extrapolate these results to the continuum limit and
therefore obtain expressions for the exact spectrum and corresponding eigenstates.

These results can prove to be valuable in studies of several further problems.
A generalization to solutions of the models with $SU(N)$ gauge groups with $N>3$ is possible
not only in the bosonic sector but in fermionic sectors as well.
The results for general $N$ may be very useful in the investigation of the large-$N$ limit.
Furthermore, in the search of solutions of more physically interesting, i.e. $D=4$ or $D=10$ dimensional models, one can
try to derive and solve a similar recursion relations.

\section*{Acknowledgements}

The Author would like to thank J. Wosiek for many discussions on the subject of this paper.

\small

\normalsize

\appendix

\section{Derivation of recursion relation eq. \eqref{eq.su3.recurrence.rel}}
\label{app. 2}

We have,
\begin{equation}
H = (a^{\dagger}a) + 2 -\frac{1}{2}(aa) -\frac{1}{2}(a^{\dagger}a^{\dagger}),
\end{equation}
and
\begin{equation}
|E\rangle = \sum_{i,j=0}^{\infty} a_{i,j} |i,j\rangle = \sum_{i,j=0}^{\infty} a_{i,j} (a^{\dagger} a^{\dagger})^i(a^{\dagger} a^{\dagger} a^{\dagger})^j |0\rangle.
\end{equation}
One can calculate that,
\begin{align}
(a^{\dagger}a) (a^{\dagger} a^{\dagger})^i (a^{\dagger}a^{\dagger}a^{\dagger})^j |0\rangle &= \big( i+\frac{3}{2}j \big) (a^{\dagger} a^{\dagger})^i (a^{\dagger}a^{\dagger}a^{\dagger})^j |0\rangle, \\
(a^{\dagger}a^{\dagger})(a^{\dagger} a^{\dagger})^i (a^{\dagger} a^{\dagger}a^{\dagger})^j |0\rangle &= (a^{\dagger} a^{\dagger})^{i+1} (a^{\dagger}a^{\dagger}a^{\dagger})^j |0\rangle, \\
(aa)(a^{\dagger} a^{\dagger})^i(a^{\dagger} a^{\dagger} a^{\dagger})^j  |0\rangle &= i(i+3j+3)(a^{\dagger}a^{\dagger})^{i-1} (a^{\dagger}a^{\dagger}a^{\dagger})^j |0\rangle \nonumber \\
& + \frac{3}{8} j (j-1) (a^{\dagger}a^{\dagger})^{i+2} (a^{\dagger}a^{\dagger}a^{\dagger})^{j-2} |0\rangle,
\end{align}
where for the last equality we have used the following commutators,
\begin{align}
\big[ (aa), (a^{\dagger} a^{\dagger})^i \big]
&= i(i+3) (a^{\dagger}a^{\dagger})^{i-1} + 2 i (a^{\dagger}a^{\dagger})^{i-1} (a^{\dagger} a),
\end{align}
\begin{align}
\big[ (aa), (a^{\dagger} a^{\dagger}a^{\dagger})^i \big]
&= \frac{3}{8} i(i-1) \ (a^{\dagger}a^{\dagger}a^{\dagger})^{i-2} (a^{\dagger}a^{\dagger})^2 + 3 i \ (a^{\dagger}a^{\dagger}a^{\dagger})^{i-1} (a^{\dagger}a^{\dagger} a),
\end{align}
and
\begin{align}
\big[ (a^{\dagger} a^{\dagger}a), (a^{\dagger} a^{\dagger}a^{\dagger})^k \big] &= \frac{1}{4} k (a^{\dagger} a^{\dagger}a^{\dagger})^{k-1} (a^{\dagger} a^{\dagger})^2
\end{align}
Thus,
\begin{multline}
H |i,j\rangle
= \big( i +\frac{3}{2}j + 2 \big)|i,j\rangle
-\frac{1}{2} |i+1,j\rangle + \\ - \frac{1}{2} i \big(i+ 3 j +3  \big) |i-1,j\rangle
- \frac{3}{16} j(j-1)  |i+2,j-1\rangle
\end{multline}
Therefore, grouping the coefficients in front of each basis state yields the recursion relation
\begin{multline}
a_{i-1,j} - \big( 2i + 3j + 4 -2E \big) a_{i,j} + (i+1)\big( i + 3 j + 4 \big)a_{i+1,j} + \\+ \frac{3}{8} (j+1)(j+2)a_{i-2,j+2}  = 0.
\end{multline}

\section{Solutions of the cut recursion relations}
\label{sec.theorems}

In this section we solve cut recursion relations for the generalized Laguerre polynomials.
We construct the theorems in the order of increasing complexity.
We start by providing the solution to a single cut recursion relation (lemma \ref{lem. pierwszy}).
Next, we consider respectively, the problem of a single inhomogeneous cut recursion relation (lemma \ref{th.pierwsze}),
the problem of two coupled recursion relations (lemma \ref{th.drugie}),
and the problem of a set of coupled recursion relations (theorem \ref{th.trzecie}).
Eventually, we solve a problem of a set of twofold coupled recursion relations (theorem \ref{th.szoste}) and formulate a
corollary (corollary \ref{th.piate}), which can be used to get the spectra of the system in all fermionic sectors.

\subsection{Some lemmas}

Let us start by considering the following lemma,
\begin{lemma}
\label{lem. pierwszy}
Consider a set of $\kappa+1$ generalized Laguerre equations for the coefficients $a_0, \dots, a_{\kappa}$,
\begin{equation}
a_{j-1}(x) - \big( 2j + \alpha + 1 - x \big) a_{j}(x) + (j+1)(j+\alpha+1) a_{j+1}(x) = 0.
\end{equation}
Then, there exist $\kappa+1$ nontrivial solutions, denoted by $(a_j)^0, \dots, (a_j)^{\kappa}$, of the form
\begin{equation}
(a_j)^i = a_0 \Gamma(\alpha+1) L_j^{\alpha}(x_i),
\end{equation}
where $x_i$ are the solutions of the equation
\begin{equation}
L_{\kappa+1}^{\alpha}(x_i) = 0, \qquad i=0, \dots, \kappa.
\end{equation}
\end{lemma}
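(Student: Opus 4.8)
The plan is to read the lemma directly off the three-term recursion eq.\eqref{eq.rescaled.recurrence.relation} itself. The cut recursion relation of order $\kappa$ consists of the generalized Laguerre equation written for $j=0,1,\dots,\kappa$ in the $\kappa+1$ unknowns $a_0,\dots,a_\kappa$, together with the boundary conventions $a_{-1}\equiv 0$ and $a_{\kappa+1}\equiv 0$. The crucial structural remark is that the first $\kappa$ of these equations (those with $j=0,\dots,\kappa-1$) form a \emph{forward} recursion: since for $\alpha>-1$ the coefficient $(j+1)(j+\alpha+1)$ of $a_{j+1}$ never vanishes, these equations determine $a_1,\dots,a_\kappa$ uniquely in terms of $a_0$. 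Because the rescaled Laguerre polynomials satisfy precisely eq.\eqref{eq.rescaled.recurrence.relation} with $L^{\alpha}_{-1}\equiv 0$, the solution of these first $\kappa$ equations with prescribed $a_0$ is
\[
a_j \;=\; a_0\,\Gamma(\alpha+1)\,L^{\alpha}_j(x),\qquad j=0,1,\dots,\kappa ,
\]
the normalization $L^{\alpha}_0(x)=1/\Gamma(\alpha+1)$ making the $j=0$ entry equal to $a_0$ (and forcing $a_0=0$ to collapse the whole sequence to zero).

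First I would then impose the single remaining equation, the $j=\kappa$ instance $a_{\kappa-1}-(2\kappa+\alpha+1-x)\,a_\kappa=0$. Substituting the expression above and invoking eq.\eqref{eq.rescaled.recurrence.relation} at $n=\kappa$, namely $L^{\alpha}_{\kappa-1}(x)-(2\kappa+\alpha+1-x)L^{\alpha}_\kappa(x)=-(\kappa+1)(\kappa+\alpha+1)L^{\alpha}_{\kappa+1}(x)$, this last equation collapses to $(\kappa+1)(\kappa+\alpha+1)\,a_0\,\Gamma(\alpha+1)\,L^{\alpha}_{\kappa+1}(x)=0$. Hence (recalling $a_0\neq 0$ for a nontrivial solution) the cut recursion relation of order $\kappa$ admits a nontrivial solution if and only if $x$ is a zero of $L^{\alpha}_{\kappa+1}$, and in that case the solution is forced to be the one displayed in the statement.

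It then remains only to count. The polynomial $L^{\alpha}_{\kappa+1}$ (equivalently $\mathcal{L}^{\alpha}_{\kappa+1}$) has degree $\kappa+1$ and, being a classical orthogonal polynomial for the weight $x^{\alpha}e^{-x}$, has exactly $\kappa+1$ distinct real positive zeros $x_0,\dots,x_\kappa$ — this follows from the orthogonality relation quoted above, or from \cite{abramowitz} — which yields the claimed $\kappa+1$ solutions $(a_j)^i$. If one wishes to add that these exhaust all solutions and are independent, writing $L^{\alpha}_j(x)=c_j x^j+\dots$ with $c_j\neq0$ shows that the $(\kappa+1)\times(\kappa+1)$ matrix whose $i$-th row is $\big(L^{\alpha}_0(x_i),\dots,L^{\alpha}_\kappa(x_i)\big)$ is obtained from the Vandermonde matrix in the $x_i$ by an invertible triangular transformation, hence is nonsingular since the $x_i$ are distinct.

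I do not expect any genuine obstacle: the argument is essentially a repackaging of the Laguerre three-term recursion. The only points requiring a little care are the boundary conventions $a_{-1}\equiv 0$ and $a_{\kappa+1}\equiv 0$, the observation that $(j+1)(j+\alpha+1)\neq0$ for $\alpha>-1$ (which is exactly what makes the forward recursion well posed), and the simplicity of the zeros of $L^{\alpha}_{\kappa+1}$, needed for the count and for the independence statement. An equivalent route, if a more linear-algebraic presentation is preferred, is to write the system as $M(x)\mathbf a=0$ with $M(x)$ tridiagonal and to use the standard three-term recursion for its leading principal minors to check that $\det M(x)$ is a nonzero constant multiple of $L^{\alpha}_{\kappa+1}(x)$, making the quantization condition manifest.
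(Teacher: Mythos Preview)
Your proposal is correct and follows essentially the same route as the paper: forward substitution from $a_0$ using the three-term recursion to obtain $a_j=a_0\,\Gamma(\alpha+1)\,L_j^{\alpha}(x)$, followed by the last ($j=\kappa$) equation which collapses via eq.\eqref{eq.rescaled.recurrence.relation} to the quantization condition $L_{\kappa+1}^{\alpha}(x)=0$. You add slightly more care (the explicit nonvanishing of $(j+1)(j+\alpha+1)$ and the Vandermonde independence argument, the latter of which the paper in fact invokes later in its completeness discussion), but the underlying argument is the same.
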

\begin{proof}
Let us write this cut set of generalized Laguerre equations in an explicit form as
\begin{align}
&-(\alpha + 1 - x) a_0 + (\alpha + 1) a_1 = 0, \nonumber \\
a_0 &- (\alpha + 3 - x) a_1 +2 (\alpha +2) a_2 = 0, \nonumber\\
a_1 &- (\alpha + 5 - x) a_2 +3(\alpha+3) a_3 = 0, \nonumber\\
a_2 &- (\alpha + 7 - x) a_3 +4(\alpha+4) a_4 = 0, \nonumber\\
\vdots \nonumber \\
a_{\kappa-4} &- (\alpha + 2\kappa-5 - x) a_{\kappa-3} + (\kappa-2)(\alpha + \kappa-2) a_{\kappa-2} = 0, \nonumber\\
a_{\kappa-3} &- (\alpha + 2\kappa-3 - x) a_{\kappa-2} + (\kappa-1) (\alpha + \kappa-1) a_{\kappa-1} = 0, \nonumber\\
a_{\kappa-2} &- (\alpha + 2\kappa-1 - x) a_{\kappa-1} + \kappa(\alpha+\kappa) a_{\kappa} = 0, \nonumber \\
a_{\kappa-1} &- (\alpha + 2\kappa+1 - x) a_{\kappa} = 0. \nonumber
\end{align}
Now, one can express $a_1$ in terms of $a_0$ using the first equation, then express $a_2$ in terms of $a_0$ using the second equation. Proceeding in this way
for all but the last equation, enables us
to rewrite the above set of equations as
\begin{align}
a_1 &= \frac{\alpha + 1 - x}{\alpha+1} a_0  = a_0 \ \Gamma(\alpha+1) L^{\alpha}_1(x), \nonumber \\
a_2 &= \frac{((\alpha + 3 - x)\frac{\alpha + 1 - x}{\alpha+1}-1)}{2 (\alpha +2)}a_0 = a_0 \ \Gamma(\alpha+1) L^{\alpha}_2(x), \nonumber\\
a_3 &= a_0 \ \Gamma(\alpha+1) L^{\alpha}_3(x), \nonumber\\
a_4 &= a_0 \ \Gamma(\alpha+1) L^{\alpha}_4(x), \nonumber\\
\vdots \nonumber \\
a_{\kappa-2} &= a_0 \ \Gamma(\alpha+1) L^{\alpha}_{\kappa-2}(x), \nonumber\\
a_{\kappa-1} &= a_0 \ \Gamma(\alpha+1) L^{\alpha}_{\kappa-1}(x), \nonumber\\
a_{\kappa} &= a_0 \ \Gamma(\alpha+1) L^{\alpha}_{\kappa}(x). \nonumber
\end{align}
The last equation reads,
\begin{equation}
a_0 L^{\alpha}_{\kappa-1}(x) - a_0 (\alpha + 2\kappa+1 - x) L^{\alpha}_{\kappa}(x) = 0.
\end{equation}
Using eq.\eqref{eq.rescaled.recurrence.relation} it can be transformed into
\begin{equation}
a_0 L^{\alpha}_{\kappa+1}(x) = 0
\end{equation}
Thus, if $x$ is tuned to be a root of $L^{\alpha}_{\kappa+1}$, then $a_0$
can be arbitrary and the system admits a nonzero solution.
On the contrary, if $x$ is not a root of
$L^{\alpha}_{\kappa+1}$, then $a_0$ must vanish. Hence all coefficients $a_j$ vanish too.

Eventually, the equation $L^{\alpha}_{\kappa+1}(x) = 0$ has $\kappa+1$ zeros, so the system
has $\kappa+1$ independent solutions. We denote them by
\begin{equation}
(a_j)^0, \dots, (a_j)^{\kappa}
\end{equation}
with
\begin{equation}
(a_j)^i = a_0 \Gamma(\alpha+1) L_j^{\alpha}(x_i).
\end{equation}
\end{proof}

We now prove two lemmas giving the solutions of an inhomogeneous Laguerre recursion relations.

\begin{lemma}
\label{th.pierwsze}
Consider a set of $\kappa+1$, inhomogeneous, generalized Laguerre equations of the following form,
\begin{equation}
a_{j-1}(x) - \big( 2j + \alpha + 1 - x \big) a_{j}(x) + (j+1)(j+\alpha+1) a_{j+1}(x) = \chi L^{\beta}_{j+q}(x),
\label{eq. theorem 1}
\end{equation}
with $\chi \ne 0$, $\alpha$ and $\beta$ such that $\frac{1}{2}(\beta-\alpha)$ is a
positive integer and $x$ not being a root of the polynomial $L_{\kappa+1}^{\alpha}(x)$.

Then, there exist specific values of $\alpha$, $\beta$ and $q$, for which the system admits solutions
with $a_j$ nonvanishing.

\end{lemma}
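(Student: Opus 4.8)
The plan is to recast the truncated inhomogeneous recursion \eqref{eq. theorem 1} as a single matrix equation, to use Lemma~\ref{lem. pierwszy} to see that its coefficient matrix is invertible under the stated hypothesis, and then to observe that the inhomogeneous data can never be identically zero, so that the (unique) solution is nontrivial.

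First I would write the $\kappa+1$ equations for $j=0,\dots,\kappa$, with the convention $a_{-1}=a_{\kappa+1}=0$ absorbed into the first and last rows, as $M_\alpha(x)\,\textbf{a}=\chi\,\textbf{v}$, where $\textbf{a}=(a_0,\dots,a_\kappa)^{T}$, $M_\alpha(x)$ is the tridiagonal matrix whose $j$-th row is the truncation of $\textbf{S}^{\alpha}_{j}(x)$, and $\textbf{v}=\big(L^{\beta}_{q}(x),L^{\beta}_{q+1}(x),\dots,L^{\beta}_{q+\kappa}(x)\big)^{T}$. By Lemma~\ref{lem. pierwszy} the homogeneous system $M_\alpha(x)\textbf{a}=0$ admits only the trivial solution precisely when $x$ is not a zero of $L^{\alpha}_{\kappa+1}$, which is exactly the hypothesis; hence $M_\alpha(x)$ is invertible and the inhomogeneous system has the unique solution $\textbf{a}=\chi\,M_\alpha(x)^{-1}\textbf{v}$. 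This solution is nonzero as soon as $\textbf{v}\neq 0$. But $\textbf{v}=0$ would force $L^{\beta}_{q}(x)=L^{\beta}_{q+1}(x)=0$, and then the three-term relation \eqref{eq.rescaled.recurrence.relation} for $L^{\beta}$ propagates the vanishing to every $L^{\beta}_n(x)$, contradicting $L^{\beta}_0(x)\neq 0$ (for $\kappa\ge 1$ this is unconditional; for $\kappa=0$ one simply discards, together with the zeros of $L^{\alpha}_{\kappa+1}$, the finitely many $x$ with $L^{\beta}_q(x)=0$). Thus $\textbf{v}\neq 0$, so $\textbf{a}\neq 0$, which is the assertion.

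What remains — and the part I expect to carry the computational weight — is to pin down the \emph{specific} triples $(\alpha,\beta,q)$ for which this solution has the explicit Laguerre-type closed form that Theorems~\ref{th.trzecie} and~\ref{th.szoste} will need, rather than the implicit expression $\chi\,M_\alpha(x)^{-1}\textbf{v}$. For this I would construct a particular solution directly: writing $p=\frac{1}{2}(\beta-\alpha)$, a positive integer by hypothesis, one iterates $p$ times a contiguous relation of the type $x\,L^{\gamma+1}_m(x)=L^{\gamma}_m(x)-(m+1)L^{\gamma}_{m+1}(x)$ together with its companion $L^{\gamma}_m(x)=(m+\gamma+1)L^{\gamma+1}_m(x)-L^{\gamma+1}_{m-1}(x)$, which rewrites $L^{\beta}_{j+q}(x)$ as a finite combination of shifted $L^{\alpha}_{j+s}(x)$ on which the operator $\textbf{S}^{\alpha}_{j}(x)\cdot$ can be evaluated explicitly; one then adds the homogeneous piece $c\,\Gamma(\alpha+1)L^{\alpha}_j(x)$ and fixes $c$ from the truncation equation at $j=\kappa$, solvable precisely because $L^{\alpha}_{\kappa+1}(x)\neq 0$, exactly as in the proof of Lemma~\ref{lem. pierwszy}. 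The hard part will be the bookkeeping of the combinatorial coefficients generated by iterating the contiguous relations $p$ times, and checking that $q$ can be chosen in the appropriate range so that no degeneracy collapses the resulting $a_j$ to zero.
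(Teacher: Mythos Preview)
Your opening move—rewriting the truncated system as $M_\alpha(x)\,\textbf{a}=\chi\,\textbf{v}$ and invoking Lemma~\ref{lem. pierwszy} to get invertibility of $M_\alpha(x)$ whenever $L^{\alpha}_{\kappa+1}(x)\neq 0$—is correct and coincides with the paper's first sentence, which simply observes that the determinant of the system is $L^{\alpha}_{\kappa+1}(x)$, so a unique solution exists. Your nonvanishing argument via $\textbf{v}\neq 0$ is also fine for the literal statement.

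Where your plan diverges is in constructing the explicit closed form, and here the paper takes a much shorter route than the one you sketch. Rather than iterating contiguous relations to rewrite $L^{\beta}_{j+q}$ in the $L^{\alpha}$ basis, the paper makes the one-line ansatz $a_j(x)=\gamma\,L^{\beta}_{j-p}(x)$, substitutes it into \eqref{eq. theorem 1}, shifts the index $j\to j+k$ with $k=\tfrac12(\beta-\alpha)$, and subtracts the standard recursion \eqref{eq.rescaled.recurrence.relation} written for $L^{\beta}$. What survives is a single multiple of $L^{\beta}_{j+1}$ plus the inhomogeneity $\tfrac{\chi}{\gamma}L^{\beta}_{j+k+q}$; matching indices and coefficients forces $p=k$, $q=1-k$, and $\gamma=4\chi/(\beta^2-\alpha^2)$ in one step. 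The solution is then $a_j=0$ for $j<k$ and $a_j=\gamma L^{\beta}_{j-k}$ for $j\ge k$. The ``hard bookkeeping'' you anticipate from iterating contiguous relations $p$ times is entirely bypassed, and—more importantly for the sequel—the ansatz makes it transparent that the particular solution is a \emph{single} shifted rescaled Laguerre polynomial rather than a linear combination, which is precisely the structural fact Theorems~\ref{th.trzecie} and~\ref{th.szoste} rely on when they chain these inhomogeneous steps and need the mixing constants $\gamma_t$ to factor cleanly.
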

\begin{proof}
Equations \eqref{eq. theorem 1} form a set of
$\kappa+1$ inhomogeneous equations with $\kappa+1$ variables $a_j(x)$. Its determinant is equal to $L_{\kappa+1}^{\alpha}(x)$ and, by assumption,
does not vanish. Therefore, the system admits a unique solution. We will construct it in the following way. We assume that $a_j(x)$ is proportional
to $L^{\beta}_{j-p}$ with some proportionality factor $\gamma \ne 0$ and $p$ some integer to be fixed latter,
\begin{equation}
a_j(x) = \gamma L^{\beta}_{j-p}(x).
\label{eq. zalozenie}
\end{equation}
Then, the recursion relation for $a_j(x)$ takes the form
\begin{align}
L^{\beta}_{j-p-1}(x) &- \big( 2j + \alpha + 1 - x \big) L^{\beta}_{j-p}(x) + \nonumber \\
&+ (j+1)(j+\alpha+1) L^{\beta}_{j-p+1}(x) + \frac{\chi}{\gamma} L^{\beta}_{j+q}(x) = 0.
\label{eq. 1}
\end{align}
The general recursion relation for $L^{\beta}_j$ is
\begin{align}
L^{\beta}_{j-1}(x) - \big( 2j + \beta + 1 - x \big) L^{\beta}_{i}(x) + (j+1)(j+\beta+1) L^{\beta}_{j+1}(x) = 0.
\label{eq. 2}
\end{align}
By assumption we have that $\frac{1}{2}(\beta-\alpha) = k$ with $k$ integer, $k>0$. Thus, for $j \ge k$ we can
shift the index $j \rightarrow j + k$ in eq.\eqref{eq. 1}. After setting $p = k$, we obtain,
\begin{align}
L^{\beta}_{j-1}(x) &- \big( 2j + \beta + 1 - x \big) L^{\beta}_{j}(x) + \nonumber \\
&+ (j+k+1)(j+k+\alpha+1) L^{\beta}_{j+1}(x) + \frac{\chi}{\gamma} L^{\beta}_{j+k+q}(x) = 0,
\label{eq. 3}
\end{align}
which can be considerably reduced using eq. \eqref{eq. 2}. Thus, eventually,
\begin{align}
- \frac{1}{4} (\beta^2 - \alpha^2) L^{\beta}_{j+1}(x) + \frac{\chi}{\gamma} L^{\beta}_{j+k+q}(x) = 0.
\label{eq. 4}
\end{align}
Eq. \eqref{eq. 4} must be satisfied for any value of the $x$ parameter.
This can only happen when
\begin{align}
q&=1-k = 1 - \frac{1}{2}(\beta-\alpha), \\
\gamma &= \frac{4 \chi}{\beta^2 - \alpha^2}.
\end{align}
It follows from eq. \eqref{eq. zalozenie} that the coefficients $a_j(x)$ for $j < k$ must vanish.

Summarizing, we showed that there exist a nonvanishing solution to the initial set of equations, given by
\begin{align}
a_j(x_i) &= 0, \qquad j=0, \dots, k-1 \\
a_j(x_i) &= \gamma L_{j-k}^{\beta}(x_i), \qquad j=k, \dots, \kappa,
\end{align}
where $k = \frac{1}{2}(\beta-\alpha)$. Note that this solution has no more the freedom of arbitrary constant factor.
\end{proof}

\begin{lemma}
\label{th.drugie}
Consider two cut, generalized Laguerre recursion relations, one of them supplemented with
a mixing term, $\chi$, of the following form,
\begin{align}
a_{j-1}(x) - \big( 2j + \alpha + 1 - x \big) a_{j}(x) + (j+1)(j+\alpha+1) a_{j+1}(x) - \chi b_{j+q}(x) = 0, \nonumber \\
b_{j-1}(x) - \big( 2j + \beta + 1 - x \big) b_{j}(x) + (j+1)(j+\beta+1) b_{j+1}(x) = 0,
\end{align}
and let us assume that $\chi \ne 0$, $\frac{1}{2}(\beta-\alpha) = k$ with $k$ a positive integer and $q \le 0$.

Then, for some specific values of the parameter $x$ the system admits solutions
with both $a_j(x)$ and $b_j(x)$ nonvanishing.
\end{lemma}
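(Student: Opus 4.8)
The plan is to decouple the two recursions by solving the homogeneous one first and feeding its solution into the other as an inhomogeneity, thereby reducing the problem to the single-equation situations already settled in Lemma~\ref{lem. pierwszy} and Lemma~\ref{th.pierwsze}. Say the cut system consists of $N+1$ equations for $b_0,\dots,b_N$ and $M+1$ equations for $a_0,\dots,a_M$. The second relation is a plain cut generalized Laguerre recursion relation, so by Lemma~\ref{lem. pierwszy} it has nontrivial solutions $b_j(x)=b_0\,\Gamma(\beta+1)\,L^{\beta}_j(x)$ precisely when $x$ is a root of $L^{\beta}_{N+1}$; for any such $x$ the $b_j$ are genuinely nonvanishing, since already $b_0=b_0\,\Gamma(\beta+1)\,L^{\beta}_0(x)=b_0\neq 0$.

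Next I would substitute this $b$-solution into the first recursion. Because $q\le 0$, the term $-\chi\,b_{j+q}(x)$ becomes $-\chi\,b_0\,\Gamma(\beta+1)\,L^{\beta}_{j+q}(x)$, with the convention that $L^{\beta}$ of a negative index vanishes; the first recursion therefore turns into a cut \emph{inhomogeneous} generalized Laguerre recursion relation of exactly the type treated in Lemma~\ref{th.pierwsze}, with right-hand side proportional to $L^{\beta}_{j+q}(x)$. By the determinant computation of Lemma~\ref{lem. pierwszy}/Lemma~\ref{th.pierwsze}, the coefficient matrix of this cut $a$-system has determinant $L^{\alpha}_{M+1}(x)$, so whenever $x$ is not one of its roots the system has a unique solution $a_j(x)$. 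To see that this solution is not identically zero, observe that for $0\le j\le M$ the index $j+q$ runs over $q,\dots,M+q$; provided the cut-off is large enough that $M\ge -q$, the index $j=-q$ is in range and there the substituted term equals $-\chi\,b_0\,\Gamma(\beta+1)\,L^{\beta}_0(x)=-\chi\,b_0\neq 0$. A nonzero inhomogeneous term cannot be balanced by $\textbf{a}=\textbf{0}$, so $a_j(x)$ is nonvanishing, and together the $a_j(x)$ and $b_j(x)$ form the claimed solution of the coupled system. In the distinguished case $q=1-k$ one can in addition write the solution explicitly: following the manipulation in the proof of Lemma~\ref{th.pierwsze}, the ansatz $a_j(x)=\gamma\,L^{\beta}_{j-k}(x)$ with $\gamma=4\chi/(\beta^{2}-\alpha^{2})$ collapses the recursion to an identity by means of eq.~\eqref{eq.rescaled.recurrence.relation}.

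The step I expect to be the main obstacle is guaranteeing that the quantizing values $x$ delivered by the $b$-system can be chosen to \emph{avoid} the roots of $L^{\alpha}_{M+1}$, i.e.\ that we never land on a resonance where the $a$-system degenerates. This rests on the fact that $L^{\beta}_{N+1}$ and $L^{\alpha}_{M+1}$ cannot share all their zeros: if $N+1\neq M+1$ their degrees differ, and if $N+1=M+1$ the two polynomials are not proportional (their values at $x=0$ differ when $\alpha\neq\beta$), so in either case at least one admissible root of $L^{\beta}_{N+1}$ lies outside the root set of $L^{\alpha}_{M+1}$. The remaining ingredients — the mild bound $M\ge -q$ on the cut-off, the nonvanishing of $L^{\beta}_0$, and the explicit collapse when $q=1-k$ — are routine bookkeeping. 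Counting how many suitable $x$ there are, and analysing the resonant cases, I would defer, since the statement only asserts the existence of \emph{some} appropriate $x$.
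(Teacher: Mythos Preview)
Your proof is correct and follows essentially the same route as the paper: solve the homogeneous $b$-recursion via Lemma~\ref{lem. pierwszy}, feed the result into the $a$-recursion as an inhomogeneity of the type handled by Lemma~\ref{th.pierwsze}, and read off the explicit form in the distinguished case $q=1-k$. You are in fact more careful than the paper on two points it leaves implicit --- the argument that a nonzero inhomogeneous term forces $a_j\not\equiv 0$, and the check that some root of $L^{\beta}_{N+1}$ can be chosen away from the roots of $L^{\alpha}_{M+1}$.
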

\begin{proof}
The set of
equations for $b_j(x)$ can be brought the form of lemma \ref{lem. pierwszy}. Therefore, for the
cut-off $\kappa$, it admits $\kappa+1$ nonzero solutions.
Consistency requires to assume that there are $\kappa+2-q$ coefficients $a_j(x)$.
Considering the possible values of the $x$ parameter we have three cases:
\begin{itemize}
\item $L_{\kappa+2-q}^{\alpha}(x) \ne 0$ and $L_{\kappa+1}^{\beta}(x) \ne 0$, then
    neither, $a_j(x)$ nor $b_j(x)$ admit a nonvanishing solutions.
\item $L_{\kappa+2-q}^{\alpha}(x) = 0$ and $L_{\kappa+1}^{\beta}(x) \ne 0$, then
    $b_j(x)$ vanish, yielding the homogeneous set of equations for $a_j(x)$. Because $L_{\kappa+2-q}^{\alpha}(x) = 0$,
    $a_j(x)$ will admit $\kappa+1-q$ nontrivial solutions of the form,
    \begin{align}
    (a_j)^i &= a_0 \Gamma(\alpha+1) L_j^{\alpha}(x_i), \qquad j=0, \dots, \kappa-q+1
    \end{align}
    with $x_i$ being the solution of $L_{\kappa+2-q}^{\alpha}(x) = 0$.
\item $L_{\kappa+2-q}^{\alpha}(x) \ne 0$ and $L_{\kappa+1}^{\beta}(x) = 0$, then
    $b_j(x)$ admit $\kappa+1$ nontrivial solutions. However, due to the mixing term, $a_j(x)$ cannot vanish.
    As was shown by the preceding lemma \ref{th.pierwsze}, for $\frac{1}{2}(\beta-\alpha) = 1-q$
    there exists a unique solution to the initial set of equations with $a_j(x)$ and $b_j(x)$ nonvanishing,
    given by
    \begin{align}
    b_j(x_i) &= b_0 \ \Gamma(\beta+1) \ L_j^{\beta}(x_i), \qquad j=0, \dots, \kappa
    \end{align}
    and
    \begin{align}
    a_j(x_i) &= 0, \qquad j=0, \dots, k -1 \\
    a_j(x_i) &= \gamma \ b_0 \ \Gamma(\beta+1) L_{j-k}^{\beta}(x_i), \qquad j=k, \dots, \kappa+k
    \end{align}
    with $x_i$ being the solution of $L_{\kappa+1}^{\beta}(x) = 0$ and
    \begin{equation}
    \gamma = \frac{4 \chi}{\beta^2 - \alpha^2}.
    \end{equation}
\end{itemize}
\end{proof}

\subsection{Theorem 1}

Now, we will use lemma \ref{th.drugie} to get a solution of a set of $m$ coupled recursion relations.
\begin{theorem}
\label{th.trzecie}
Consider a set of $m+1$ cut recursion relations for the generalized Laguerre polynomials with mixing terms, $\chi_i$, as described below,
\begin{align}
a^0_{j-1}(x) &- \big( 2j + \alpha_0 + 1 - x \big) a^0_{j}(x) + (j+1)(j+\alpha_0+1) a^0_{j+1}(x) - \chi_0 a^1_{j+q_0}(x) = 0, \nonumber \\
a^1_{j-1}(x) &- \big( 2j + \alpha_1 + 1 - x \big) a^1_{j}(x) + (j+1)(j+\alpha_1+1) a^1_{j+1}(x) - \chi_1 a^2_{j+q_1}(x) = 0, \nonumber \\
a^2_{j-1}(x) &- \big( 2j + \alpha_2 + 1 - x \big) a^2_{j}(x) + (j+1)(j+\alpha_2+1) a^2_{j+1}(x) - \chi_2 a^3_{j+q_2}(x) = 0, \nonumber \\
&\vdots \nonumber \\
a^{m-1}_{j-1}(x) &- \big( 2j + \alpha_{m-1} + 1 - x \big) a^{m-1}_{j}(x) + \nonumber \\
                &+ (j+1)(j+\alpha_{m-1}+1) a^{m-1}_{j+1}(x) - \chi_{m-1} a^m_{j+q_{m-1}}(x) = 0, \nonumber \\
a^{m}_{j-1}(x) &- \big( 2j + \alpha_{m} + 1 - x \big) a^{m}_{j}(x) + (j+1)(j+\alpha_{m}+1) a^{m}_{j+1}(x) = 0, \nonumber
\end{align}

Then, assuming that there are $\kappa+1$ coefficients $a^m_j(x)$, there will exist $\kappa+1$
nontrivial solutions for $x$ such that $L^{\alpha_m}_{\kappa+1}(x)=0$, provided that for all integers $0<i\le m$ we have, ($k_m \equiv 0$),
\begin{align}
\begin{split}
\frac{1}{2}(\alpha_m-\alpha_{m-i}) &= k_{m-i},  \\
q_{m-i}&=1-k_{m-i}+k_{m-i+1}.
\end{split}
\end{align}
\end{theorem}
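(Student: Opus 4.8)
The plan is to prove the theorem by downward induction along the chain of recursions, starting from the purely homogeneous equation for $a^m$ and then peeling off one recursion at a time. First I would fix $x$ to be a root of $L^{\alpha_m}_{\kappa+1}$; by Lemma \ref{lem. pierwszy} the $\kappa+1$ equations for $a^m_0,\dots,a^m_\kappa$ then admit the nontrivial solution $a^m_j=a_0\,\Gamma(\alpha_m+1)\,L^{\alpha_m}_j(x)$, and $L^{\alpha_m}_{\kappa+1}$ has exactly $\kappa+1$ simple roots, so there are $\kappa+1$ such data. The inductive claim carried up the chain is that, for each such root $x_i$, there is (up to the single overall constant $a_0$) a unique solution with
\[
a^{\ell}_j(x_i)\ \propto\ L^{\alpha_m}_{\,j-k_\ell}(x_i),\qquad k_\ell\equiv\tfrac12(\alpha_m-\alpha_\ell),
\]
the first $k_\ell$ components $a^\ell_0,\dots,a^\ell_{k_\ell-1}$ vanishing; the base case is $\ell=m$, where $k_m=0$.

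For the inductive step, assume $a^{\ell+1}_j\propto L^{\alpha_m}_{\,j-k_{\ell+1}}$ is already known. Inserting it into the mixing term $-\chi_\ell a^{\ell+1}_{j+q_\ell}$ turns the $\ell$-th recursion into an \emph{inhomogeneous} generalized Laguerre recursion with parameter $\alpha_\ell$ and right-hand side proportional to $L^{\alpha_m}_{\,j+q_\ell-k_{\ell+1}}$. This is precisely the setting of Lemma \ref{th.pierwsze} (with $\beta=\alpha_m$ there): a nonvanishing $a^\ell$ exists only if $\tfrac12(\alpha_m-\alpha_\ell)$ is a positive integer — this is the hypothesis $\tfrac12(\alpha_m-\alpha_{m-i})=k_{m-i}$ — and only if the source index matches, $q_\ell-k_{\ell+1}=1-k_\ell$, i.e. $q_\ell=1-k_\ell+k_{\ell+1}$, which is the stated relation $q_{m-i}=1-k_{m-i}+k_{m-i+1}$. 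The lemma then gives $a^\ell_j\propto L^{\alpha_m}_{\,j-k_\ell}$ with proportionality constant built from $4\chi_\ell/(\alpha_m^2-\alpha_\ell^2)$ and with $a^\ell_j=0$ for $j<k_\ell$, which closes the induction. For the bottom pair $(a^{m-1},a^m)$ one uses Lemma \ref{th.drugie} directly; its three-case dichotomy — the downstream polynomial vanishing while the upstream determinant does not, forcing a nonzero upstream solution through the mixing — is what I would invoke, level by level, to rule out collapse at $x=x_i$.

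Running the induction down to $\ell=0$ yields, for each of the $\kappa+1$ simple roots of $L^{\alpha_m}_{\kappa+1}$, a full tuple $(a^0,\dots,a^m)$ fixed up to the single free scale inherited from $a^m$; these are the asserted $\kappa+1$ nontrivial solutions. The main obstacle I anticipate is purely the bookkeeping: checking that the shifts compose additively so that every $a^\ell$ is a shifted copy of the \emph{same} Laguerre polynomial $L^{\alpha_m}$, and that the cut (finite) system remains consistent at every level — i.e. that $a^\ell$ carries $\kappa+1+k_\ell$ coefficients, exactly $k_\ell$ of which are forced to zero — while all the analytic content is already supplied by Lemmas \ref{lem. pierwszy}, \ref{th.pierwsze} and \ref{th.drugie}.
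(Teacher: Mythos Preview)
Your proposal is correct and follows essentially the same route as the paper: solve the top homogeneous recursion for $a^m$ via Lemma~\ref{lem. pierwszy}, handle the pair $(a^{m-1},a^m)$ with Lemma~\ref{th.drugie}, and then descend level by level using Lemma~\ref{th.pierwsze}, the compatibility conditions on $k_{m-i}$ and $q_{m-i}$ being exactly what makes each inhomogeneous step land on a shifted $L^{\alpha_m}$. Your index bookkeeping ($a^\ell_j\propto L^{\alpha_m}_{\,j-k_\ell}$ with $\kappa+1+k_\ell$ coefficients at level $\ell$) agrees with the paper's count $\kappa+2-\sum q$, and the accumulated proportionality constants are the paper's $\Gamma_{m-1,\ell}=\prod_{t=\ell}^{m-1}\gamma_t$.
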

\begin{proof}
We start by solving the set of equations for $a^m_j(x)$ using the results of lemma \ref{lem. pierwszy}.
Two situations can be possible, either $x$ is a solution to the equation $L_{\kappa+1}^{\alpha_m}(x) = 0$, or it is not.

\begin{itemize}
\item $L_{\kappa+1}^{\alpha_m}(x) \ne 0$ \\

If $L_{\kappa+1}^{\alpha_m}(x) \ne 0$, lemma \ref{lem. pierwszy} states that all variables $a^m_j(x)$ must vanish. In this case
the equations for the variables $a^{m-1}_j(x)$ become homogeneous. Hence,
we recover the assumptions of the present theorem with $m$ replaced by $m-1$.

\item $L_{\kappa+1}^{\alpha_m}(x) = 0$ \\

Lemma \ref{lem. pierwszy} predicts  $\kappa+1$ nonzero solutions for all variables $a^m_j(x)$.
\end{itemize}

In the following we assume without loss of generality the second possibility, $x$ being
a solution of $L_{\kappa+1}^{\alpha_m}(x) = 0$.
The set of equations for the variables $a^m_j(x)$ and $a^{m-1}_j(x)$ satisfies the assumptions of lemma \ref{th.drugie},
provided that
\begin{align}
\begin{split}
\frac{1}{2}(\alpha_m-\alpha_{m-1}) &= k_{m-1} \textrm{ with } k_{m-1} \textrm{ a positive integer,} \\
q_{m-1} &= 1-k_{m-1}.
\end{split}
\end{align}
According to lemma
\ref{th.drugie} we need $\kappa+2-q_{m-1}$ coefficients $a^{m-1}_j(x)$ for consistency.
Moreover, this theorem gives the solutions for $a^{m-1}_j(x)$ as being proportional to $a^m_j(x)$
and introduces the mixing constant
\begin{equation}
\gamma_{m-1} = \frac{4\chi_{m-1}}{\alpha_m^2 - \alpha_{m-1}^2},
\end{equation}
namely,
\begin{equation}
a^{m-1}_j(x) = a_0^{m} \ \frac{4\chi_{m-1}}{\alpha_m^2 - \alpha_{m-1}^2} \ \Gamma(\alpha_m+1) L_{j-k_{m-1}}^{\alpha_m}(x).
\end{equation}
Consequently, considering the set of equations for $a^{m-2}_j(x)$ we have
\begin{align}
a^{m-2}_{j-1}(x) &- \big( 2j + \alpha_{m-2} + 1 - x \big) a^{m-2}_{j}(x) + \nonumber \\
                &+ (j+1)(j+\alpha_{m-2}+1) a^{m-2}_{j+1}(x) + \chi_{m-2} a^{m-1}_{j+q_{m-2}}(x) = 0.
\end{align}
This can be solved using lemma \ref{th.pierwsze}. We assume that
\begin{equation}
a^{m-2}_j(x) = \gamma_{m-2} \ a_0^m \ \Gamma(\alpha_m+1) a^{m-1}_{j+q_{m-2}}(x)
\end{equation}
and then, provided that,
\begin{align}
\begin{split}
\frac{1}{2}(\alpha_m-\alpha_{m-2}) &= k_{m-2} \textrm{ with } k_{m-2} \textrm{ a positive integer, } \\
q_{m-2} &= 1-k_{m-2} +k_{m-1}.
\end{split}
\end{align}
we find another mixing constant, namely,
\begin{equation}
\gamma_{m-2} =  \frac{4\chi_{m-2}}{\alpha_m^2 - \alpha_{m-2}^2}.
\end{equation}
Again, we need $\kappa+3-q_{m-1}-q_{m-2}$ coefficients $a^{m-2}_j(x)$.
Hence, finally,
\begin{equation}
a^{m-2}_j(x) = a_0^{m} \gamma_{m-1} \gamma_{m-2} \Gamma(\alpha_m+1) L^{\alpha_m}_{j-k_{m-2}}(x),
\end{equation}
which we write as
\begin{equation}
a^{m-2}_j(x) = a_0^{m} \Gamma_{m-1,m-2} \Gamma(\alpha_m+1) L^{\alpha_m}_{j-k_{m-2}}(x),
\end{equation}
with
\begin{equation}
\Gamma_{x,y} = \prod_{t=y}^x \gamma_t.
\end{equation}
One can repeat these steps $m-2$ times and find the solution
for all $a^p_j(x)$.\\

Summarizing, for all $\kappa+1$ roots of the equation $L_{\kappa+1}^{\alpha_m}(x) = 0$, denoted by $x_i$, we have a nontrivial solution of the form,
\begin{align}
&a^m_j(x_i) =  a_0^m \ \Gamma(\alpha_m+1) L^{\alpha_m}_j(x_i), \nonumber \\
            & \qquad\qquad\qquad \qquad\qquad\qquad \textrm{with } \qquad 0 \le j \le \kappa \nonumber \\
&a^{m-1}_j(x_i) = a_0^m \ \Gamma_{m-1,m-1} \ \Gamma(\alpha_m+1) L_{j-k_{m-1}}^{\alpha_m}(x_i), \nonumber \\
            & \qquad\qquad\qquad \qquad\qquad\qquad \textrm{with } \qquad k_{m-1} \le j \le \kappa + k_{m-1} \nonumber \\
&a^{m-2}_j(x_i) = a_0^m \ \Gamma_{m-1,m-2} \ \Gamma(\alpha_m+1) L_{j-k_{m-1}-k_{m-2}}^{\alpha_m}(x_i), \nonumber \\
            & \qquad\qquad\qquad \qquad\qquad\qquad \textrm{with } \qquad k_{m-1}+k_{m-2} \le j \le \kappa+k_{m-1}+k_{m-2} \nonumber \\
& \qquad\qquad\qquad \qquad\qquad\qquad \vdots \nonumber \\
&a^{1}_j(x_i) = a_0^m \ \Gamma_{m-1,1}  \ \Gamma(\alpha_m+1) L_{j-\sum_{t=1}^{m-1} k_{t}}^{\alpha_m}(x_i), \nonumber \\
            & \qquad\qquad\qquad \qquad\qquad\qquad \textrm{with } \qquad \Big(\sum_{t=1}^{m-1} k_{t}\Big) \le j \le \kappa+\Big(\sum_{t=1}^{m-1} k_t \Big) \nonumber \\
&a^{0}_j(x_i) = a_0^m \ \Gamma_{m-1,0}  \ \Gamma(\alpha_m+1) L_{j-\sum_{t=0}^{m-1} k_{t}}^{\alpha_m}(x_i), \nonumber \\
            & \qquad\qquad\qquad \qquad\qquad\qquad \textrm{with } \qquad \Big(\sum_{t=0}^{m-1} k_{t}\Big) \le j \le \kappa+ \Big(\sum_{t=0}^{m-1} k_t \Big). \nonumber
\end{align}
\end{proof}

\subsection{Application of theorem 1 to the recurrence in the bosonic sector}

We apply theorem \ref{th.trzecie} to the recurrence relation eq.\eqref{eq.su3.recurrence.rel}.

\subsubsection{Even solutions}

In the case of even solutions we get the identification
\begin{align}
\begin{array}{cl}
\alpha_t &= \ 6t+3, \\
\chi_t &= \ -\frac{3}{4} (t+1)(2t+1), \\
k_t &= \ \frac{1}{2}(\alpha_t - \alpha_{t-1}) = 3,\\
q_t &= \ -2 = 1 - k_t.
\end{array}
\end{align}
Therefore, we obtain
\begin{equation}
\gamma_t = -\frac{1}{3} \frac{(t+1)(2t+1)}{(2\kappa+1)^2-(2t+1)^2},
\end{equation}
and
\begin{equation}
\Gamma^{even}_{\kappa-1,t} = \prod_{p=t}^{\kappa-1} \gamma_{p} = (-24)^{t-\kappa}\frac{(\kappa+t)!}{(\kappa-t)!(2t)!}.
\end{equation}

\subsubsection{Odd solutions}

In the case of odd solutions, we can identify as follows
\begin{align}
\begin{array}{cl}
\alpha_t &= \ 6t+6,  \\
\chi_t &= \ -\frac{3}{4} (t+1)(2t+3),  \\
k_t &= \ \frac{1}{2}(\alpha_t - \alpha_{t-1}) = 3,  \\
q_t &= \ -2 = 1 - k_t,
\end{array}
\end{align}
and obtain
\begin{equation}
\gamma_t = -\frac{1}{6} \frac{(t+1)(t+\frac{3}{2})}{(\kappa+1)^2-(t+1)^2}.
\end{equation}
Thus,
\begin{equation}
\Gamma^{odd}_{\kappa-1,p} = \prod_{t=p}^{\kappa-1} \gamma_{t} = (-24)^{t-\kappa}\frac{(\kappa+t+1)!}{(\kappa-t)!(2t+1)!}.
\end{equation}

\subsection{Theorem 2}

Eventually, we need the solutions to a twofold coupled Laguerre recursion relations. We present it in a form of the
following theorem.

\begin{theorem}
\label{th.szoste}
Consider a set of $m+1$ cut recursion relations for the generalized Laguerre polynomials with mixing terms as described below,
\begin{align}
\textbf{S}^{\alpha_0}_j \cdot  \textbf{a}^0_{j}(x) &- \chi_0 \ a^1_{j+q_0}(x) - \mu_0 \ a^2_{j+p_0}(x) = 0, \nonumber \\
\textbf{S}^{\alpha_1}_j \cdot  \textbf{a}^1_{j}(x) &- \chi_1 \ a^2_{j+q_1}(x) - \mu_1 \ a^3_{j+p_1}(x) = 0, \nonumber \\
\textbf{S}^{\alpha_2}_j \cdot  \textbf{a}^2_{j}(x) &- \chi_2 \ a^3_{j+q_2}(x) - \mu_2 \ a^4_{j+p_2}(x) = 0, \nonumber \\
&\vdots \nonumber \\
\textbf{S}^{\alpha_{m-3}}_j \cdot  \textbf{a}^{m-3}_{j}(x) &- \chi_{m-3} \ a^{m-2}_{j+q_{m-3}}(x) - \mu_{m-3} \ a^{m-1}_{j+p_{m-1}}(x) = 0, \nonumber \\
\textbf{S}^{\alpha_{m-2}}_j \cdot  \textbf{a}^{m-2}_{j}(x) &- \chi_{m-2} \ a^{m-1}_{j+q_{m-2}}(x) - \mu_{m-2} \ a^m_{j+p_{m-2}}(x) = 0, \nonumber \\
\textbf{S}^{\alpha_{m-1}}_j \cdot  \textbf{a}^{m-1}_{j}(x) &- \chi_{m-1} \ a^m_{j+q_{m-1}}(x) = 0, \nonumber \\
\textbf{S}^{\alpha_m}_j \cdot  \textbf{a}^m_{j}(x) &= 0,
\end{align}

If $x$ is one of the $\kappa+1$ roots of the equation $L_{\kappa+1}^{\alpha_m}(x)=0$ then, there exist one nontrivial solution
with variables $a^{p}_t(x)$ partially nonzero (for each $p$ at least for one value of the index $t$ $a^{p}_t(x) \ne 0$)
, provided that for all integers $0 < i \le m$ we have, ($k_t \equiv 0$, for $t \ge m$)
\begin{align}
\begin{split}
\frac{1}{2}(\alpha_m-\alpha_{m-i}) &= k_{m-i},  \\
p_{m-i} &= 1 - k_{m-i} + k_{m-i+2},  \\
q_{m-i} &= 1 - k_{m-i} + k_{m-i+1}.
\end{split}
\end{align}
\end{theorem}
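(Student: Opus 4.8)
The plan is to run the same cascade as in the proof of Theorem~\ref{th.trzecie}, with one extra layer of superposition to absorb the second mixing term that now appears in each equation. First I would dispose of the bottom equation $\textbf{S}^{\alpha_m}_j\cdot\textbf{a}^m_j(x)=0$ by Lemma~\ref{lem. pierwszy}: if $x$ is not a root of $L^{\alpha_m}_{\kappa+1}$ then all $a^m_j(x)=0$, the equation for $a^{m-1}$ becomes homogeneous, and one is reduced to the same statement with $m$ replaced by $m-1$. So, exactly as in Theorem~\ref{th.trzecie}, I take $x=x_i$ to be one of the $\kappa+1$ simple roots of $L^{\alpha_m}_{\kappa+1}$, for which Lemma~\ref{lem. pierwszy} gives $a^m_j(x_i)=a_0^m\,\Gamma(\alpha_m+1)\,L^{\alpha_m}_j(x_i)$.

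The first step upward is literally the first step of Theorem~\ref{th.trzecie}: the equation for $a^{m-1}$ carries the single source $\chi_{m-1}\,a^m_{j+q_{m-1}}$, a multiple of $L^{\alpha_m}_{j+q_{m-1}}$, so Lemma~\ref{th.pierwsze} applies with $\beta=\alpha_m$, $\alpha=\alpha_{m-1}$, forcing $k_{m-1}=\tfrac12(\alpha_m-\alpha_{m-1})\in\mathbb{Z}_{>0}$ and (using $k_t\equiv 0$ for $t\ge m$) $q_{m-1}=1-k_{m-1}$, and yielding $a^{m-1}_j\propto L^{\alpha_m}_{j-k_{m-1}}$ with mixing constant $\gamma_{m-1}=4\chi_{m-1}/(\alpha_m^2-\alpha_{m-1}^2)$. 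This is the $i=1$ instance of the stated relations, the $\mu$-term being absent from the penultimate equation.

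The genuinely new ingredient is the inductive step for $a^{m-i}$ with $i\ge 2$. Its equation has two inhomogeneous terms, $\chi_{m-i}\,a^{m-i+1}_{j+q_{m-i}}$ and $\mu_{m-i}\,a^{m-i+2}_{j+p_{m-i}}$, and by the induction hypothesis the lower solutions already have the form $a^{m-i+1}_j\propto L^{\alpha_m}_{j-k_{m-i+1}}$ and $a^{m-i+2}_j\propto L^{\alpha_m}_{j-k_{m-i+2}}$. Hence the two sources are proportional to $L^{\alpha_m}_{j+q_{m-i}-k_{m-i+1}}$ and $L^{\alpha_m}_{j+p_{m-i}-k_{m-i+2}}$ respectively. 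The whole point of the hypotheses $q_{m-i}=1-k_{m-i}+k_{m-i+1}$ and $p_{m-i}=1-k_{m-i}+k_{m-i+2}$ is that both of these indices collapse to $j+1-k_{m-i}$: the two inhomogeneities become multiples of the \emph{same} Laguerre polynomial $L^{\alpha_m}_{j+1-k_{m-i}}$, their coefficients simply add, and what remains is a single-source relation to which Lemma~\ref{th.pierwsze} applies verbatim (with $\beta=\alpha_m$, $\alpha=\alpha_{m-i}$, $k=k_{m-i}$). This produces $a^{m-i}_j\propto L^{\alpha_m}_{j-k_{m-i}}$ with a mixing constant equal to $\big(\chi_{m-i}\,\Gamma^{(\chi)}+\mu_{m-i}\,\Gamma^{(\mu)}\big)/(\alpha_m^2-\alpha_{m-i}^2)$ up to numerical factors, the two $\Gamma$'s being products of the previously produced $\gamma_t$'s in the manner of $\Gamma_{x,y}=\prod_{t=y}^{x}\gamma_t$ from Theorem~\ref{th.trzecie}. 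Iterating down to $a^0$ and letting $x_i$ run over the $\kappa+1$ roots of $L^{\alpha_m}_{\kappa+1}$ then yields the asserted $\kappa+1$ nontrivial solutions, each with every block $a^p_\bullet$ not identically zero.

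The hard part is the bookkeeping rather than anything conceptual. One must track how many coefficients $a^{m-i}_j$ remain active as the two families of shifts $q$ and $p$ accumulate along the cascade, and verify that the cut is compatible at every level — no over- or under-determination, and the relevant determinant $L^{\alpha_{m-i}}_{\bullet}(x_i)$ nonzero — exactly as the count ``$\kappa+2-q$ coefficients'' is threaded through Lemma~\ref{th.drugie} and Theorem~\ref{th.trzecie}. A secondary point is to guarantee that the merged inhomogeneity coefficient $\chi_{m-i}\Gamma^{(\chi)}+\mu_{m-i}\Gamma^{(\mu)}$ is nonzero, so that $a^{m-i}$ is genuinely nonvanishing; should it vanish for special parameter values, one branches into the degenerate alternative in which the equation for $a^{m-i}$ is homogeneous, precisely along the case distinction already made in Lemma~\ref{th.drugie}.
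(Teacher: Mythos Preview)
Your proposal is essentially the paper's own proof: cascade upward from the homogeneous bottom equation via Lemma~\ref{lem. pierwszy} and Lemma~\ref{th.drugie}, then at each subsequent level observe that the hypotheses on $p_{m-i}$ and $q_{m-i}$ align the two inhomogeneities to the same $L^{\alpha_m}_{j+1-k_{m-i}}$ so that Lemma~\ref{th.pierwsze} applies with a single merged source. The one imprecision is your description of the accumulated mixing constants: they are \emph{not} products $\prod_t\gamma_t$ as in Theorem~\ref{th.trzecie}; rather, the proportionality factor for $a^{m-i}$ is a single $\gamma_{m-i}$ obeying the three-term recursion $\gamma_y=4(\chi_y\gamma_{y+1}+\mu_y\gamma_{y+2})/(\alpha_m^2-\alpha_y^2)$ with $\gamma_{y\ge m}\equiv 1$, so your $\Gamma^{(\chi)}$ and $\Gamma^{(\mu)}$ are simply $\gamma_{m-i+1}$ and $\gamma_{m-i+2}$.
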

\begin{proof}
The prove goes in a similar manner as the proof of theorem \ref{th.trzecie}. We start by solving the set of equations for $a^m_j$
using the results of lemma \ref{lem. pierwszy}. By assumption we have that
$L_{\kappa+1}^{\alpha_m}(x) = 0$.
The set of equations for $a^m_j(x)$ and $a^{m-1}_j(x)$ satisfies the assumptions of lemma \ref{th.drugie},
provided that
\begin{align}
\begin{split}
\frac{1}{2}(\alpha_m-\alpha_{m-1}) &= k_{m-1} \textrm{ with } k_{m-1} \textrm{ a positive integer,}  \\
q_{m-1} &= 1-k_{m-1}.
\end{split}
\end{align}
Therefore,
lemma \ref{th.drugie} gives the solutions for $a^{m-1}_j(x)$ as being proportional to $a^m_j(x)$
and introduces a mixing constant,
\begin{equation}
\gamma_{m-1} = \frac{4\chi_{m-1}}{\alpha_m^2 - \alpha_{m-1}^2}.
\end{equation}
Consequently, considering the set of equations for $a^{m-2}_j(x)$ we have,
\begin{align}
\textbf{S}^{\alpha_{m-2}}_j &\cdot  \textbf{a}^{m-2}_{j}(x) - \chi_{m-2} a^{m-1}_{j+q_{m-2}}(x) - \mu_{m-2} a^m_{j+p_{m-2}}(x) = 0,
\end{align}
where
\begin{align}
a^{m-1}_j(x) &= a_0^m \ \gamma_{m-1} \Gamma(\alpha_m+1) L_{j-k_{m-1}}^{\alpha_m}(x),  \\
a^{m}_j(x) &= a_0^m \ \Gamma(\alpha_m+1) L_{j}^{\alpha_m}(x).
\end{align}
Thus, we have
\begin{align}
\textbf{S}^{\alpha_{m-2}}_j &\cdot \textbf{a}^{m-2}_{j}(x) - \chi_{m-2}\gamma_{m-1} a_0^m \Gamma(\alpha_m+1) L_{j-k_{m-1}+q_{m-2}}^{\alpha_m}(x) \nonumber \\
&- \mu_{m-2}  a_0^m \Gamma(\alpha_m+1) L_{j+p_{m-2}}^{\alpha_m}(x) = 0.
\end{align}
Provided that,
\begin{align}
\begin{split}
\frac{1}{2}(\alpha_m-\alpha_{m-2}) &= k_{m-2} \textrm{ with } k_{m-2} \textrm{ a positive integer,} \\
p_{m-2} &= 1-k_{m-2}, \\
q_{m-2} &= 1 -k_{m-2} + k_{m-1},
\end{split}
\end{align}
we can use lemma \ref{th.pierwsze} and obtain the solution for $a^{m-2}_{j}(x)$ as being proportional to $a^m_j(x)$ with a mixing constant given by,
\begin{equation}
\gamma_{m-2} = \frac{4(\chi_{m-2}\gamma_{m-1} + \mu_{m-2})}{\alpha_m^2 - \alpha_{m-2}^2}.
\end{equation}
Next, the set of equations for $a^{m-3}_j(x)$ reads,
\begin{align}
\textbf{S}^{\alpha_{m-3}}_j &\cdot  \textbf{a}^{m-3}_{j}(x) - \chi_{m-3} a^{m-2}_{j+q_{m-3}}(x) - \mu_{m-3} a^{m-1}_{j+p_{m-3}}(x) = 0,
\end{align}
where $a^{m-1}_j(x)$ and $a^{m-2}_j(x)$ are given now by
\begin{align}
a^{m-2}_j(x) &= a_0^m \ \gamma_{m-2} \ \Gamma(\alpha_m+1) L_{j-k_{m-2}}^{\alpha_m}(x), \\
a^{m-1}_j(x) &= a_0^m \ \gamma_{m-1} \ \Gamma(\alpha_m+1) L_{j-k_{m-1}}^{\alpha_m}(x).
\end{align}
Thus, we have
\begin{align}
\textbf{S}^{\alpha_{m-3}}_j \cdot \textbf{a}^{m-3}_{j}(x) &- a_0^m \chi_{m-3}\gamma_{m-2} \Gamma(\alpha_m+1) L_{j-k_{m-2}+q_{m-3}}^{\alpha_m}(x) \nonumber \\
&- a_0^m \mu_{m-3}\gamma_{m-1} \Gamma(\alpha_m+1) L_{j-k_{m-1}+p_{m-3}}^{\alpha_m}(x) = 0,
\end{align}
Again, provided that
\begin{align}
\begin{split}
\frac{1}{2}(\alpha_m-\alpha_{m-3}) &= k_{m-3} \textrm{ with } k_{m-3} \textrm{ a positive integer,} \\
p_{m-3} &= 1 - k_{m-3} + k_{m-1}, \\
q_{m-3} &= 1-k_{m-3}+k_{m-2},
\end{split}
\end{align}
we can use lemma \ref{th.pierwsze} and obtain the solution for $a^{m-3}_{j}(x)$ as
being proportional to $a^m_j(x)$ with a mixing constant given by,
\begin{equation}
\gamma_{m-3} = \frac{4(\chi_{m-3}\gamma_{m-2} + \mu_{m-3}\gamma_{m-1})}{\alpha_m^2 - \alpha_{m-3}^2}.
\end{equation}
One can repeat these steps $m-3$ times and find the solution
for all $a^p_j$.\\

Summarizing, for all roots $x_i$ of the equation $L_{\kappa+1}^{\alpha_m}(x) = 0$ we have a nontrivial solution of the form,
\begin{align}
&a^m_j(x_i) =  a^m_0 \ \Gamma(\alpha_m+1) \ L^{\alpha_m}_j(x_i), \nonumber \\
            & \qquad\qquad\qquad \qquad\qquad\qquad \textrm{with } \qquad 0 \le j \le \kappa \nonumber \\
&a^{m-1}_j(x_i) = a^m_0 \ \Gamma_{m-1,m-1} \ \Gamma(\alpha_m+1)  L_{j-k_{m-1}}^{\alpha_m}(x_i), \nonumber \\
            & \qquad\qquad\qquad \qquad\qquad\qquad \textrm{with } \qquad k_{m-1} \le j \le \kappa+k_{m-1} \nonumber \\
&a^{m-2}_j(x_i) = a^m_0 \ \Gamma_{m-1,m-2} \ \Gamma(\alpha_m+1)  L_{j-k_{m-1}-k_{m-2}}^{\alpha_m}(x_i), \nonumber \\
            & \qquad\qquad\qquad \qquad\qquad\qquad \textrm{with } \qquad k_{m-1}+k_{m-2} \le j \le \kappa+k_{m-1}+k_{m-2} \nonumber
\end{align}
\begin{align}
&\qquad\qquad\qquad \qquad\qquad\qquad \vdots \nonumber \\
&a^{1}_j(x_i) = a^m_0 \ \Gamma_{m-1,1} \ \Gamma(\alpha_m+1)  L_{j-\sum_{t=1}^{m-1} k_{t}}^{\alpha_m}(x_i), \nonumber \\
            & \qquad\qquad\qquad \qquad\qquad\qquad \textrm{with } \qquad \Big( \sum_{t=1}^{m-1} k_{t} \Big) \le j \le \kappa+\Big(\sum_{t=1}^{m-2}k_t\Big) \nonumber \\
&a^{0}_j(x_i) = a^m_0 \ \Gamma_{m-1,0} \ \Gamma(\alpha_m+1)  L_{j-\sum_{t=0}^{m-1} k_{t}}^{\alpha_m}(x_i),  \nonumber \\
            & \qquad\qquad\qquad \qquad\qquad\qquad \textrm{with } \qquad \Big( \sum_{t=0}^{m-1} k_{t} \Big) \le j \le \kappa+\Big(\sum_{t=0}^{m-1}k_t\Big) \nonumber
\end{align}
where we have introduced by force a similar notation as used in the previous theorem, namely,
\begin{equation}
\Gamma_{x,y} = \gamma_y,
\end{equation}
with $\gamma_y$ defined recursively by
\begin{equation}
\gamma_y = 4\frac{\chi_y \gamma_{y+1} + \mu_y \gamma_{y+2}}{\alpha_m^2 - \alpha_{y}^2},
\end{equation}
and $\gamma_{y \ge m} \equiv 1$.
\end{proof}

\subsection{Application of theorem 2 to the recurrence in the sector $n_F=1$}

In this subsection we use the theorem \ref{th.szoste} to find solutions of the recurrence relation in the sector with $n_F=1$.

\subsubsection{Even solutions}

In order to obtain solutions of eqs.\eqref{eq. f1_recurrence relation even}, we identify as follows
\begin{align}
\begin{array}{cl}
\alpha_t &= \ \left\{ \begin{array}{ll}
3t+4 ,& t \textrm{ odd},\\
3t+5 , & t \textrm{ even},
\end{array} \right. \\
\chi_t &= \ \left\{ \begin{array}{ll}
-\frac{1}{2}(t+1), & t \textrm{ odd},\\
-\frac{3}{2}(t+1), & t \textrm{ even},
\end{array} \right.  \\
\mu_t &= \ -\frac{3}{8}(t+1)(t+2),
\end{array}
\end{align}
There is neither closed expression for $\gamma_y$ nor for $\Gamma_{x,y}$.

\subsubsection{Odd solutions}

In a similar manner, solutions of eqs.\eqref{eq. f1_recurrence relation odd}, can be obtained after the following identification,
\begin{align}
\begin{array}{cl}
\alpha_t &= \ \left\{ \begin{array}{ll}
3t+4, & t \textrm{ even},\\
3t+5 ,& t \textrm{ odd},
\end{array} \right. \\
\chi_t &= \ \left\{ \begin{array}{ll}
-\frac{t}{2}, & t \textrm{ even},\\
-\frac{3}{2}(t+1), & t \textrm{ odd},
\end{array} \right. \\
\mu_t &= \ -\frac{3}{8}(t+1)(t+2).
\end{array}
\end{align}
Similarly, in this case closed expressions for $\gamma_y$ and for $\Gamma_{x,y}$ do not exist.

\subsection{Corollary}

\begin{corollary}
\label{th.piate}
The set of all possible values of the $x$ parameter for which there exist
a nontrivial solution of the cut set of generalized Laguerre recursion relations
does not depend on the precise form of the mixing coefficients $\chi$ and $\mu$ as long as $\chi$ and $\mu$ do not depend on $j$.
\end{corollary}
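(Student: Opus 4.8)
The plan is to recast the statement as a determinantal identity, which makes the independence from $\chi$ and $\mu$ transparent without re-tracing the whole cascade of Lemmas \ref{lem. pierwszy}--\ref{th.drugie} and Theorems \ref{th.trzecie}--\ref{th.szoste}. At a fixed cut-off the cut set of generalized Laguerre recursion relations is a finite homogeneous linear system $M(x)\,\mathbf{a}=0$ in the unknowns $\{a^i_j(x)\}$, and by definition the admissible values of $x$ are exactly those for which $M(x)$ is singular. First I would group the unknowns into blocks $\mathbf{a}^0,\mathbf{a}^1,\dots,\mathbf{a}^m$ according to the superscript $i$. The decisive structural observation, read off directly from the systems written in Theorems \ref{th.trzecie} and \ref{th.szoste}, is that the $i$-th recursion couples $\mathbf{a}^i$ only to itself and to the higher-index blocks $\mathbf{a}^{i+1}$ (and, in the twofold case of Theorem \ref{th.szoste}, also $\mathbf{a}^{i+2}$): the mixing is strictly ``downward''. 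Consequently $M(x)$ is block triangular, so $\det M(x)=\prod_i \det M_i(x)$, where each diagonal block $M_i(x)$ is exactly the tridiagonal matrix of a cut generalized Laguerre recursion relation with parameter $\alpha_i$, i.e. the matrix assembled from the vectors $\mathbf{S}^{\alpha_i}_j$. Crucially, not a single entry of any $M_i(x)$ involves $\chi$ or $\mu$.

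The next step is to invoke the computation carried out inside the proof of Lemma \ref{lem. pierwszy}, which shows that the determinant of such a cut Laguerre block consisting of $N_i$ equations with parameter $\alpha_i$ equals, up to a nonzero $x$-independent constant, $L^{\alpha_i}_{N_i}(x)$. Hence the set of admissible $x$ is $\bigcup_i\{x : L^{\alpha_i}_{N_i}(x)=0\}$, a union of zero sets of generalized Laguerre polynomials that depends only on the upper indices $\alpha_i$ and on the block sizes $N_i$. Both of these are fixed independently of the mixing coefficients: the $\alpha_i$ are given, and the $N_i$ are determined by the cut-off together with the integer shift relations for $q_i$, $p_i$ and $k_i=\tfrac12(\alpha_m-\alpha_i)$ appearing in the hypotheses of Theorems \ref{th.trzecie} and \ref{th.szoste}. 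This is precisely where the hypothesis that $\chi$ and $\mu$ are independent of $j$ is used: it is what makes the shifted-Laguerre ansatz $a^i_j\propto L^{\alpha_m}_{j-k_i}(x)$ of Lemmas \ref{th.pierwsze}--\ref{th.drugie} consistent, so that switching on a mixing term never changes how many variables must be carried in a given block; it only changes the proportionality constants $\gamma_i$, which are eigenvector data, not eigenvalue data.

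The part I expect to require the most care is making the previous paragraph precise, namely nailing down the exact size of each block $\mathbf{a}^i$ in the cut system and checking that this count genuinely does not see $\chi,\mu$; concretely one has to redo, in the $j$-independent generality of the corollary, the consistency bookkeeping (``there are $\kappa+2-q$ coefficients $a_j$'', and its iterated analogues) that appears in the proofs of Lemma \ref{th.drugie} and Theorem \ref{th.szoste}. A harmless-looking but worth-checking subcase is the one in which some $\gamma_y=4(\chi_y\gamma_{y+1}+\mu_y\gamma_{y+2})/(\alpha_m^2-\alpha_y^2)$ accidentally vanishes for a special choice of $\chi,\mu$: the determinant factorisation already guarantees that no admissible $x$ can be created or destroyed this way, and the only thing to confirm is that at a zero of the relevant top factor the cascade of Theorem \ref{th.szoste} still produces a genuinely nontrivial solution, which it does because the top block $\mathbf{a}^m$ is never affected by the vanishing of a lower $\gamma_y$. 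Once the block-size count is secured, the factorisation $\det M(x)=\prod_i (\text{const})\,L^{\alpha_i}_{N_i}(x)$ is complete, and the corollary — indeed the explicit product form of the spectral polynomials $\Theta^{n_F}_{N_{cut}}(E)$ used in the main text — follows immediately.
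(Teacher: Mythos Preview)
Your proof is correct and reaches the same conclusion as the paper, but the route is genuinely different. The paper argues inductively: starting from the last recursion (which has no mixing), it observes that if $x$ is not a root of $L^{\alpha_m}_{\kappa_m+1}$ then all $a^m_j$ vanish, whereupon the $(m-1)$-th recursion becomes homogeneous and Lemma~\ref{lem. pierwszy} applies again; iterating gives the product $\prod_p L^{\alpha_p}_{\kappa_p+1}(x)=0$. You instead package the whole cut system as a single homogeneous linear problem $M(x)\mathbf{a}=0$, note that the ``downward-only'' coupling makes $M(x)$ block upper-triangular, and factor the determinant as $\prod_i \det M_i(x)$, with each diagonal block identified (via Lemma~\ref{lem. pierwszy}) as proportional to $L^{\alpha_i}_{N_i}(x)$. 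The underlying structural observation is the same, but your packaging is more economical and in fact proves slightly more than stated: since the off-diagonal blocks never enter the determinant, the spectrum is unchanged even if $\chi$ and $\mu$ were allowed to depend on $j$. In that light, your paragraph locating the role of the $j$-independence hypothesis is a bit off --- that hypothesis is needed for the shifted-Laguerre \emph{eigenvector} ansatz of Lemmas~\ref{th.pierwsze}--\ref{th.drugie} and Theorems~\ref{th.trzecie}--\ref{th.szoste}, but your determinantal argument for the \emph{eigenvalues} does not actually use it. What the paper's inductive proof buys, conversely, is that it dovetails directly with the constructive solution cascade already built in Theorems~\ref{th.trzecie} and~\ref{th.szoste}, so one sees not only that a nontrivial solution exists at each admissible $x$ but also which family it belongs to.
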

\begin{proof}
The set of $m+1$ cut generalized Laguerre recursion relations of the form presented,
in theorems \ref{th.trzecie} or \ref{th.szoste}, have a specific structure, namely,
the $i$-th recursion relation contains mixing terms proportional only
to the variables of the $j$-th recursion relation where $j>i$. Let us assume that $\forall j>i$
all the amplitudes described by the $j$-th recursion relation vanish.
Then, the $i$-th recursion relation has no mixing terms and can be solved by the lemma \ref{lem. pierwszy}.
Hence, it follows from the lemma \ref{lem. pierwszy} that
the possible values of the $x$ parameter are given by the equation
\begin{equation}
L_{\kappa_i+1}^{\alpha_i}(x) = 0,
\end{equation}
with some appropriate $\kappa_i$.
Let us now assume that $x$ is not a solution of the above equation. Then, the amplitudes
described by the $i$-th recursion relation must all vanish. In this case, the $(i-1)$-th recursion
relation has no mixing terms and can be solved by the lemma \ref{lem. pierwszy}. The set of possible
values of the $x$ parameter is therefore given by the sum of zeros of the equations,
\begin{align}
L_{\kappa_i+1}^{\alpha_i}(x) &= 0, \\
L_{\kappa_{i-1}+1}^{\alpha_{i-1}}(x) &= 0.
\end{align}
By induction, we can conclude that the set of all possible values of the $x$ parameter for which there exist
a nontrivial solution of the cut set of $m+1$ generalized Laguerre recursion relations is given by the equation
\begin{equation}
\Bigg( \prod_{p=0}^m L_{\kappa_p+1}^{\alpha_p}(x) \Bigg) = 0.
\end{equation}
We complete the proof by noting that the cut set of $m+1$ generalized Laguerre recursion relations
by assumption contains one recursion relation without mixing terms which can be solved
with the use of lemma \ref{lem. pierwszy}.
\end{proof}


\begin{thebibliography}{99}
\bibitem{luscher1} M. L\"uscher, 'Some analytic results concerning the mass spectrum of Yang-Mills gauge thoeries on a torus', Nucl. Phys. B 219 (1983) 233-261
\bibitem{luscher2} M. L\"uscher, G. M\"unster, 'Weak-coupling expansion of the low-lying energy values in the $SU(2)$ gauge theory on a torus', Nucl. Phys. B 232 (1984) 445-472
\bibitem{baal1} J. Koller, P. van Baal, 'A non-perturbative analysis in finite volume gauge theory', Nucl. Phys. B 302 (1988) 1-64
\bibitem{baal2} P. van Baal, 'Gauge theory in a finite volume', Acta Phys. Pol. B 20 (1989) 295
\bibitem{hoppe} J. Hoppe, 'Quantum theory of a massless relativistic surface and a two dimensional bound state problem', PhD thesis MIT, 1982, unpublished (scanned version avaible at http://www.aei-potsdam.mpg.de/ hoppe)
\bibitem{dewit} B. de Wit, J. Hoppe, H. Nicolai, 'On the quantum mechanics of supermembranes', Nucl. Phys. B 305 (1988) 545
\bibitem{wosiek1} J. Wosiek, 'Spectra of supersymmetric Yang-Mills quantum mechanics', Nucl. Phys. B 644 (2002) 85-112
\bibitem{wosiek3} J. Wosiek, 'Supersymmetric Yang-Mills quantum mechanics in various dimensions', PoS LAT 2005 (2006) 273
\bibitem{wosiek4} M. Campostrini, J. Wosiek, 'Exact Witten index in D=2 supersymmetric Yang-Mills quantum mechanics', Phys. Lett. B 550 (2002) 121-127
\bibitem{wosiek5} M. Campostrini, J. Wosiek, 'High precision study of the structure of D=4 supersymmetric Yang-Mills quantum mechanics', Nucl. Phys. B 703 (2004) 454-498
\bibitem{kotanski2} J. Kotanski, 'Energy spectrum and wave-functions of four-dimensional supersymmetric Yang-Mills quantum mechanics for very high cut-offs', Acta Phys.Pol. B 37 (2006) 2813-2838
\bibitem{doktorat_macka} M. Trzetrzelewski, 'Supersymmetric Yang-Mills quantum mechanics with arbitrary number of colors', Ph.D. thesis, Jagiellonian University
\bibitem{korcyl_phd} P. Korcyl, Ph.D. thesis in preparation
\bibitem{claudson} M. Claudson, M.B. Halpern, 'Ground state wave functions', Nucl. Phys. B 250 (1985) 689-715
\bibitem{samuel} S. Samuel, 'Solutions of extended supersymmetric matrix models for arbitrary gauge groups', Phys. Lett B 411 (1997) 268-273
\bibitem{maciek3} M. Trzetrzelewski, 'Reduction of $SU(N)$ loop tensors to trees', J. Math. Phys. 46 (2005) 103512
\bibitem{korcyl0} P. Korcyl, 'Eigenvalues and eigenvectors of the $d$ dimensional Laplace operator in a cut Fock basis', [arXiv:1008.0778]
\bibitem{korcyl1} P. Korcyl, 'Recursive approach to supersymmetric quantum mechanics for arbitrary fermion occupation number',  Acta Phys. Pol. B 41 (2010) 795, [arXiv:0912.5265]
\bibitem{maciek2} M. Trzetrzelewski, J. Wosiek, 'Quantum systems in a cut Fock space', Acta Phys.Polon. B35 (2004) 1615-1624, [hep-th/0308007]
\bibitem{maciek1} M. Trzetrzelewski, 'Quantum mechanics in a cut Fock space', Acta Phys. Polon. B 35 (2004) 2393-2416, [hep-th/0407059]
\bibitem{korcyl_msc} P. Korcyl, 'Classical trajectories and quantum supersymmetry', Phys. Rev. D 74 (2006) 115012,
\bibitem{abramowitz} M. Abramowitz, I.A.Stegun, 'Handbook of Mathematical Functions with Formulas, Graphs, and Mathematical Tables', Dover Publications, New York, 1968
\end{thebibliography}
\end{document}